\documentclass[10pt,letter]{article}

\usepackage{amsmath}
\usepackage{amsthm}
\usepackage{amssymb}
\usepackage{booktabs}
\usepackage{multirow}

\usepackage{enumitem}

\setlist[itemize]{leftmargin=*,itemsep=0pt}

\usepackage{fullpage}

\newcommand{\shortversion}[1]{}

\newcommand{\proofsketchversion}[1]{}
\newcommand{\fullproofversion}[1]{#1}

\usepackage{tikz}
\usetikzlibrary{shapes,shapes.callouts,shadows,arrows,backgrounds,%
matrix,patterns,arrows,decorations.pathmorphing,decorations.pathreplacing,%
positioning,fit,calc,decorations.text%
}

\newenvironment{myquote}{\begin{quote}\setlength{\parskip}{0pt}}{\end{quote}}

\newtheorem{LEM}{Lemma}
\newtheorem{THE}{Theorem}

\newtheorem{CLM}{Claim}

\newtheorem{PRO}{Proposition}

\def\hy{\hbox{-}\nobreak\hskip0pt}

\newcommand{\SB}{\{\,}%
\newcommand{\SM}{\;{|}\;}%
\newcommand{\SE}{\,\}}%

\let\phi=\varphi
\let\epsilon=\varepsilon

\newcommand{\CCC}{\mathcal{C}}

\newcommand{\SSS}{\mathcal{S}}

\newcommand{\NP}{\text{\normalfont NP}}
\newcommand{\paraNP}{\text{\normalfont para-NP}}

\newcommand{\coNP}{\text{\normalfont coNP}}

\newcommand{\paracoNP}{\text{\normalfont para-coNP}}
\newcommand{\FPT}{\text{\normalfont FPT}}

\newcommand{\W}[1][xxxx]{\text{\normalfont W[#1]}}

\newcommand{\STB}{\text{\normalfont stb}}
\newcommand{\ADM}{\text{\normalfont adm}}
\newcommand{\PRF}{\text{\normalfont prf}}
\newcommand{\COM}{\text{\normalfont com}}

\newcommand{\RBG}{\text{\normalfont rbg}}
\newcommand{\STG}{\text{\normalfont stg}}
\newcommand{\SEM}{\text{\normalfont sem}}
\newcommand{\bsdist}{\text{\normalfont dist}}

\newcommand{\MCC}{\textsc{Multicolored Clique}}
\newcommand{\HS}{\textsc{Hitting Set}}
\newcommand{\Small}{\textsc{$\sigma$-Small}}
\newcommand{\Repair}{\textsc{$\sigma$-Repair}}
\newcommand{\Adjust}{\textsc{$\sigma$-Adjust}}
\newcommand{\Center}{\textsc{$\sigma$-Center}}

\newcommand{\FOIS}{\textup{CF}}
\newcommand{\FOADM}{\ADM}
\newcommand{\FOCOM}{\COM}
\newcommand{\FOSTB}{\STB}

\newcommand{\FOSET}{\textup{SET}}
\newcommand{\FODIFF}{\textup{SYM-DIFF}}
\newcommand{\FOATMOSTK}{\textup{ATMOST}}
\newcommand{\FOSMALL}{\textup{$\sigma$-SMALL}}
\newcommand{\FOREPAIR}{\textup{$\sigma$-REPAIR}}
\newcommand{\FOADJUST}{\textup{$\sigma$-ADJUST}}
\newcommand{\FOCENTER}{\textup{$\sigma$-CENTER}}

\nonfrenchspacing

\begin{document}

\title{The Complexity of Repairing, Adjusting, and Aggregating of Extensions in
  Abstract Argumentation}

\author{
Eun Jung Kim\\
\mbox{}\small LAMSADE, CNRS, Paris, France\\[-2pt]
\small eunjungkim78@gmail.com\\
\and
Sebastian Ordyniak\thanks{Research supported by Employment of Newly Graduated Doctors of
  Science for Scientific Excellence (CZ.1.07/\hskip0pt
  2.3.00/\hskip0pt 30.0009).}
\\[-2pt]
\mbox{}\small Department of Theoretical Computer Science, Masaryk University, Brno, Czech Republic\\[-2pt]
\small sordyniak@gmail.com\\
\and
Stefan Szeider\thanks{Research supported by the European Research
    Council, grant reference~239962 (COMPLEX REASON).}
\\[-2pt]
\mbox{}\small Institute of Information Systems, Vienna University of Technology,
Vienna, Austria\\[-2pt]
\small stefan@szeider.net
}

\date{}




\maketitle
\begin{abstract}
\noindent We study the computational complexity of problems that arise in
abstract argumentation in the context of dynamic argumentation,
minimal change, and aggregation.
In particular, we consider the following problems where always an
argumentation framework $F$ and a small positive integer $k$ are given.
\begin{itemize}
\item The \textsc{Repair} problem asks whether a given set of
  arguments can be modified into an extension by at most $k$
  elementary changes (i.e., the extension is of distance $k$ from the
  given set).
\item The \textsc{Adjust} problem asks whether a given extension can
  be modified by at most $k$ elementary changes into an extension that
  contains a specified argument.
\item The \textsc{Center} problem asks whether, given two extensions
  of distance $k$, whether there is a ``center''  extension that is a
  distance at most $k-1$ from both given extensions.
\end{itemize}
We study these problems in the framework of parameterized complexity,
and take the distance $k$ as the parameter. Our results covers several
different semantics, including admissible, complete, preferred, semi-stable and stable semantics.
\end{abstract}

\pagestyle{plain}
\thispagestyle{empty}
\section{Introduction}
Starting with the seminal work by Dung \cite{Dung95} the area of
argumentation has evolved to one of the most active research branches
within Artificial Intelligence
\cite{BenchcaponDunne07,RahwanSimari09}.  Dung's abstract
argumentation frameworks, where arguments are seen as abstract
entities which are just investigated with respect to how they relate
to each other, in terms of ``attacks'', are nowadays well understood
and different semantics (i.e., the selection of sets of arguments
which are jointly acceptable) have been proposed. Such sets of
arguments are called extensions of the underlying argumentation
framework.

Argumentation is an inherently dynamic process, and there has been
increasingly interest in the dynamic behavior of abstract
argumentation.  A first study in this direction was carried out by
Cayrol, et al.~\cite{CayrolDupinLagasquie08} and was concerned with
the impact of additional arguments on extensions.  Baumann and Brewka
\cite{BaumannBrewka10} investigated whether it is possible to modify a
given argumentation framework in such a way that a desired set of arguments becomes an
extension or a subset of an extension.  Baumann \cite{Baumann12}
further extended this line of research by considering the minimal
exchange necessary to enforce a desired set of arguments. In this
context, it is interesting to consider notions of \emph{distance}
between extensions. Booth~et
al.~\cite{BoothCaminadaPodlaszewskiRahwan12} suggested a general
framework for defining and studying distance measures.

A natural question that arises in the context of abstract
argumentation is how computationally difficult it is to decide whether
an argumentation framework admits an extension at all, or whether a given argument belongs
to at least one extension or to all extensions of the framework.  Indeed this
question has been investigated in a series of papers, and the exact
worst-case complexities have been determined for all popular semantics
\cite{CostemarquisDevredMarquis05,DimopoulosTorres96,Dung95,DunneBenchcapon02,CaminadaDunne08,DunneWooldridge09,DvorakWoltran10b}. Abstract
argumentation has also been studied in the framework of
\emph{parameterized complexity} \cite{DowneyFellows99} which admits a
more fine-grained complexity analysis that can take structural aspects
of the argumentation framework into account~\cite{Dunne07,DvorakOrdyniakSzeider12,KimOrdyniakSzeider11,DvorakSzeiderWoltran10,DvorakPichlerWoltran12}.

Surprisingly, very little is known on the computational complexity of
problems in abstract argumentation that arise in the context of
dynamic behavior of argumentation, such as finding an extension by
minimal change. However, as the distance in these problems are assumed
to be small, it suggests itself to consider the distance as the
parameter for a parameterized analysis.

\paragraph{New Contribution}
In this paper we provide a detailed complexity map of various problems
that arise in in the context of
dynamic behavior of argumentation.
In particular, we consider the following problems where always an
argumentation framework $F$ and a small positive integer $k$ are
given, and $\sigma$ denotes a semantics.
\begin{itemize}
\item The $\sigma$\hy \textsc{Repair} problem asks whether a given set of
  arguments can be modified into a $\sigma$\hy extension by at most $k$
  elementary changes (i.e., the extension is of distance $k$ from the
  given set).

  This problem is of relevance, for instance, when a $\sigma$\hy
  extension $E$ of an argumentation framework is given, and
  dynamically the argumentation framework changes (i.e., attacks are
  added or removed, new arguments are added). Now the set $E$ may not
  any more be a $\sigma$\hy extension of the new framework, and we
  want to repair it with minimal change to obtain a $\sigma$\hy
  extension.

\item The $\sigma$\hy \textsc{Adjust} problem asks whether a given
  $\sigma$\hy extension can be modified by at most $k$ elementary
  changes into a $\sigma$\hy extension that contains a specified
  argument.

  This problem is a variant of the previous problem, however, the
  argumentation framework does not change, but dynamically the
  necessity occurs to include a certain argument into the
  extension, by changing the given extension minimally.

\item The $\sigma$\hy \textsc{Center} problem asks whether, given two
  $\sigma$\hy extensions of distance $k$, whether there is a
  ``center'' $\sigma$\hy extension that is a distance at most $k-1$
  from both given extensions.

  This problem arises in scenarios of judgment aggregations, when,
  for instance, two extensions that reflect the opinion of two
  different agents are presented, and one tries to find a compromise
  extension that minimizes the distance to both extensions.

\end{itemize}
We study these problems in the framework of parameterized complexity,
and take the distance $k$ as the parameter. Our results covers several
different semantics, including admissible, complete, preferred,
semi-stable and stable semantics. The parameterized complexity of the
above problems are summarized in Figures~\ref{fig:comp-results}.

\begin{figure}
  \begin{center}
    \begin{tabular}{@{}l@{\qquad}@{\qquad}l@{\qquad}l@{}}
\toprule
      $\sigma$ & general & bounded degree  \\
\midrule
      $\ADM$ & $\W[1]$-hard & \FPT \\
      $\COM$ & $\W[1]$-hard & \FPT \\
      $\PRF$ & $\paracoNP$-hard & $\paracoNP$-hard  \\
      $\SEM$ & $\paracoNP$-hard & $\paracoNP$-hard \\
      $\STB$ & $\W[1]$-hard & \FPT\\ 
\bottomrule
    \end{tabular}
  \end{center}
  \caption{Parameterized Complexity of the problems \Repair{},
    \Adjust{}, and \Center{} for general argumentation frameworks
    and argumentation frameworks  of bounded
    degree, depending on the considered semantics.}
  \label{fig:comp-results}
\end{figure}

\section{Preliminaries}
\label{sec:preliminaries}

An \emph{abstract argumentation system} or \emph{argumentation
  framework} (\emph{AF}, for short) is a pair $(X,A)$ where $X$ is a
(possible infinite) set of elements called \emph{arguments} and
$A\subseteq X\times X$ is a binary relation called \emph{attack
  relation}. In this paper we will restrict ourselves to finite AFs,
i.e., to AFs for which $X$ is a finite set. If $(x,y)\in A$ we say that
\emph{$x$ attacks $y$} and that $x$ is an \emph{attacker} of~$y$.

An AF $F=(X,A)$ can be considered as a directed graph, and therefore it
is convenient to borrow notions and notation from graph theory.  For a
set of arguments $Y \subseteq X$ we denote by $F[Y]$ the AF $(Y,\SB
(x,y) \in A \SM x,y \in Y \SE)$ and by $F - Y$ the AF $F[X \setminus
Y]$. 


We define the \emph{degree} of an argument $x \in X$ to
be the number of arguments $y\in X\setminus \{x\}$ such that $(x,y)\in
A$ or $(y,x)\in A$. The maximum degree of an AF $F=(X,A)$ is the maximum
degree over all its atoms.
 We say a class $\CCC$ of AFs has bounded maximum
degree, or bounded degree for short, if there exists a constant $c$ such
that for every $F \in \CCC$ the maximum degree of the undirected graph
$\overline{F}$ is at most $c$.

If $E$ and $E'$ are $2$ sets of arguments of $F$ then we define $E
\bigtriangleup E'$ to be the symmetric difference between $E$ and
$E'$, i.e., $E \bigtriangleup E':=\SB x \in X \SM (x \in E \land x
\notin E') \lor (x \in E' \land x \notin E) \SE$. We also define
$\bsdist(E,E')$ to be $|E \bigtriangleup E'|$.
 
Let $F=(X,A)$ be an AF, $S \subseteq X$ and $x \in X$. We say that $x$
is \emph{defended} (in $F$) by $S$ if for each $x' \in X$ such that
$(x',x) \in A$ there is an $x'' \in S$ such that $(x'',x') \in A$. We
denote by $S_F^+$ the set of arguments $x \in X$ such that either $x \in
S$ or there is an $x' \in S$ with $(x',x) \in A$, and we omit the
subscript if $F$ is clear from the context. Note that in our setting
the set $S$ is contained in $S^+_F$.
We say $S$ is
\emph{conflict-free} if there are no arguments $x,x' \in S$ with $(x,x')
\in A$.

Next we define commonly used semantics of AFs, see the survey of
Baroni and Giacomin~\cite{BaroniGiacomin09}.  We consider a semantics $\sigma$ as a mapping
that assigns to each AF $F=(X,A)$ a family $\sigma(F) \subseteq 2^X$ of
sets of arguments, called \emph{extensions}.  We denote by \ADM{},
\COM{}, \PRF{}, \SEM{} and \STB{} the \emph{admissible}, \emph{complete},
\emph{preferred}, \emph{semi-stable} and \emph{stable} semantics, respectively.  These five semantics are characterized by the
following conditions which hold for each AF $F=(X,A)$ and each
conflict-free set $S\subseteq X$.
\begin{itemize}
\item $S \in \ADM{}(F)$ if each $s \in S$ is defended by
  $S$.
\item $S \in \COM{}(F)$ if $S \in \ADM{}(F)$ and every argument that is
  defended by $S$ is contained in $S$.


\item $S \in \PRF{}(F)$ if $S \in \ADM{}(F)$ and there is no $T \in
  \ADM{}(F)$ with $S \subsetneq T$.
\item
  $S \in \SEM{}(F)$ if $S \in \ADM{}(F)$ and there is no $T \in \ADM{}(F)$ with
  $S^+ \subsetneq T^+$.
\item $S \in \STB{}(F)$ if $S^+=X$.

\end{itemize}


\paragraph{Parameterized Complexity}
For our investigation we need to take two measurements into account:
the input size~$n$ of the given AF $F$ and the parameter~$k$ given as
the input to \Repair{}, \Adjust{}, and \Center{}. The theory of
\emph{parameterized complexity}, introduced and pioneered by Downey
and Fellows~\cite{DowneyFellows99}, provides the adequate concepts and
tools for such an investigation. We outline the basic notions of
parameterized complexity that are relevant for this paper, for an
in-depth treatment we refer to other sources
\cite{FlumGrohe06,Niedermeier06}.

An instance of a parameterized (decision) problem is a pair $(I,k)$
where $I$ is the \emph{main part} and $k$ is the \emph{parameter}; the
latter is usually a non-negative integer.  A parameterized problem is
\emph{fixed-parameter tractable} (FPT) if there exists a computable
function $f$ such that instances $(I,k)$ of size $n$ can be solved in
time $f(k)\cdot n^{O(1)}$, or equivalently, in \emph{fpt-time}.
Fixed-parameter tractable problems are also called \emph{uniform
  polynomial-time tractable} because if $k$ is considered constant,
then instances with parameter $k$ can be solved in polynomial time
where the order of the polynomial is independent of $k$, in contrast
to \emph{non-uniform polynomial-time} running times such as
$n^{O(k)}$.  Thus we have three complexity categories for
parameterized problems: (1) problems that are fixed-parameter
tractable (uniform polynomial-time tractable), (2) problems that are
non-uniform polynomial-time tractable, and (3) problems that are
$\NP$-hard or $\coNP$-hard if the parameter is fixed to some constant
(such as $k$-SAT which is $\NP$-hard for $k=3$). The major complexity
assumption in parameterized complexity is $\FPT \subsetneq
\W[1]$. Hence, $\W[1]$-hard problems are not fixed-parameter tractable
under this assumption. Such problems can still be non-uniform
polynomial-time tractable. Problems that fall into (3) above are said
to be \paraNP-hard or \paracoNP-hard.
The classes in parameterized
complexity are defined by {\em fpt-reduction}, which are many-one
reductions that can be computed in fpt-time, and where the parameter
of the target instance is bounded by a function of the parameter of
the source instance.

In our proofs of complexity results we will reduce from the following
problem, which is $\W[1]$\hy complete~\cite{Pietrzak03}.

\begin{myquote} \MCC{}

  \emph{Instance:} A natural number $k$, and a $k$-partite graph
  $G=(V,E)$ with partition $\{V_1,\dots,V_k\}$.

  \emph{Parameter:} $k$.

  \emph{Question:} Does $G$ contain a clique of size $k$?
\end{myquote}
W.l.o.g.~we may assume that the parameter $k$ of \MCC{} is even. To
see this, we reduce from \MCC{} to itself as follows. Given an
instance $(G,k)$ of \MCC{} we construct an equivalent instance
$(G',2k)$ of \MCC{} where $G'$ is obtained from the vertex-disjoint union of $2$ copies of
$G$ by adding all edges between the two copies. 

\section{Problems for Dynamic Argumentation}

In this section we present the problems that we consider for dynamic
argumentation.
Let $\sigma \in \{\ADM,\COM,\PRF,\SEM,\STB\}$. Recall that for two
sets $E$ and $E'$ of arguments $E \bigtriangleup E'$ and
$\bsdist(E,E')$ are defined as the symmetric difference and the
cardinality of the symmetric difference between $E$ and $E'$, respectively.
\begin{myquote}
  \Small{}

  \emph{Instance:} An AF $F=(X,A)$, a nonnegative integer $k$.

  \emph{Parameter:} $k$.

  \emph{Question:} Is there a nonempty extension $E\in \sigma(F)$ of size at most $k$?
\end{myquote}

\begin{myquote}
  \Repair{}

  \emph{Instance:} An AF $F=(X,A)$, a set of arguments $S\subseteq X$, a nonnegative integer $k$.

  \emph{Parameter:} $k$.

  \emph{Question:} Is there a nonempty extension $E\in \sigma(F)$ s.t.\ $\bsdist(E,S)\leq k$?
\end{myquote}

\begin{myquote}
  \Adjust{}

  \emph{Instance:} An AF $F=(X,A)$, an extension $E_0\in \sigma(F)$, an argument $t\in X$, a nonnegative integer~$k$.

  \emph{Parameter:} $k$.

  \emph{Question:} Is there an extension $E\in \sigma(F)$ s.t.\ $\bsdist(E,E_0)\leq k$ and $t\in E_0\bigtriangleup E$?
\end{myquote}

\begin{myquote}
  \Center{}

  \emph{Instance:} An AF $F=(X,A)$, two extensions $E_1, E_2 \in \sigma(F)$.

  \emph{Parameter:} $\bsdist(E_1,E_2)$.

  \emph{Question:} Is there an extension $E\in \sigma(F)$
  s.t.\ $\bsdist(E,E_i)< \bsdist(E_1,E_2)$ for every $i \in \{1,2\}$?
\end{myquote}

\section{Hardness Results}
\label{sec:hardness}

This section is devoted to our hardness results. We start by showing
that all the problems that we consider in the context of dynamic
argumentation are \W[1]-hard and hence unlikely to have FPT-algorithms.
\begin{THE}\label{the:W1-hard}
  Let $\sigma \in \{\ADM,\COM,\PRF,\SEM,\STB\}$. Then the problems
  \Small{}, \Repair{}, \Adjust{}, \Center{} are \W[1]-hard.
\end{THE}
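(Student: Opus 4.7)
I reduce from \MCC{}. Given $(G,k)$ with $G=(V_1\cup\cdots\cup V_k,E)$, I first establish $\W[1]$-hardness of \Small{} uniformly for all five semantics via a single construction, and then derive the other three problems by padding.

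For \Small{}, build $F=(X,A)$ with vertex-arguments $a_v$ for each $v\in V$ and \emph{forcer} arguments $e_v^j$ for each $v\in V_i$ and $j\in\{1,\dots,k\}\setminus\{i\}$. Add mutual attacks between $a_u$ and $a_v$ whenever $u,v$ lie in the same color class or in different classes with $\{u,v\}\notin E$. For every $v\in V_i$ and $j\neq i$, add $e_v^j\to a_v$, $a_u\to e_v^j$ for every $u\in V_j$, and a self-attack $e_v^j\to e_v^j$. The self-attacks keep every $e_v^j$ out of every conflict-free set, so every admissible set is contained in $\{a_v:v\in V\}$. If $a_v\in S$ with $v\in V_i$ then, for each $j\neq i$, the attacker $e_v^j$ of $a_v$ must itself be attacked by $S$, which is only possible by including some $a_u$ with $u\in V_j$. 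Hence every nonempty admissible $S$ has at least one argument per color class, so $|S|\ge k$; conflict-freeness within each class caps $|S|$ at $k$ and forces $|S|=k$, and conflict-freeness across classes forces the selected vertices to form a $k$-clique. Conversely, every $k$-clique yields $S=\{a_{v_1},\dots,a_{v_k}\}$ that is stable (each $a_u\notin S$ is attacked by the same-class member of $S$, and each $e_v^j$ by $a_{v_j}\in S$), complete (each $e_v^j$ has among its attackers $a_{v_j}\in S$, which $S$ does not attack back, so $e_v^j$ is not defended by $S$), preferred (no superset is conflict-free), and semi-stable (the range is already $X$). Hence $S\in\sigma(F)$ for every $\sigma\in\{\ADM,\COM,\PRF,\SEM,\STB\}$, establishing $\W[1]$-hardness of \Small{} uniformly.

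The hardness of \Repair{} is immediate by taking the same AF with $S_0:=\emptyset$ and parameter $k$, since $\bsdist(E,\emptyset)=|E|$. For \Adjust{}, augment $F$ with a trigger $t$ and, for each class $V_j$, a self-attacking forcer $t_1^j$ with $t_1^j\to t$ and $a_u\to t_1^j$ for every $u\in V_j$; additionally, enlarge $G$ with a universally-adjacent dummy vertex $v_i^0\in V_i$ per class, and add the attack $a_{v_i^0}\to t$ for every $i$. Then $E_0:=\{a_{v_1^0},\dots,a_{v_k^0}\}$ is a stable (hence $\sigma$-) extension for every $\sigma$. Taking $t$ as target and parameter $2k+1$, any $\sigma$-extension $E$ with $t\in E_0\triangle E$ must contain $t$ and defend $t$ against every $t_1^j$ (forcing some $V_j$-vertex into $E$) and against every $a_{v_i^0}$ (forcing an \emph{original} $V_i$-vertex into $E$); consequently $E=\{t\}\cup K$ for some $k$-clique $K$ of $G$ on original vertices, and conversely any such clique yields an extension at distance exactly $2k+1$ from $E_0$. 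For \Center{}, duplicate the anchor by using two disjoint dummy families $\{v_i^1\},\{v_i^2\}$ per class, producing two stable extensions $E_1,E_2$ of distance exactly $2k$; with attacks linking each anchor to a shared trigger structure, any $\sigma$-extension strictly closer to both $E_1$ and $E_2$ must embed a $k$-clique of the original $G$.

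The main obstacle is the design of the $e_v^j$-gadget: it must simultaneously (i)~forbid singleton admissible sets by demanding a per-class defender against $e_v^j$, (ii)~preserve completeness of the $k$-clique extension because some attacker of $e_v^j$ inside the clique is itself unattacked by the clique, and (iii)~permit stability because $e_v^j$ is also attacked by the clique. The self-attack on $e_v^j$ is the central device that achieves all three together, collapsing the relevant extensions into the stable semantics and thereby covering all five semantics with one construction. In the \Adjust{} and \Center{} reductions, the universally-adjacent dummy vertices together with the extra attacks $a_{v_i^0}\to t$ (and their double-copy analogue) are the key tools that guarantee a fixed anchor $\sigma$-extension across all semantics and force any admissible swap to pick up a genuine $k$-clique of $G$; the parameter has to be chosen carefully to rule out trivial near-copies of the anchor as spurious solutions.
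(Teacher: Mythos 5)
Your reductions for \Small{} and \Repair{} are correct and essentially the paper's: the paper encodes adjacency by letting only the arguments of neighbours $u\in V_j$ of $v$ attack the forcer $z_v^j$, whereas you encode it by mutual attacks between non-adjacent cross-class vertex arguments and let every $a_u$ with $u\in V_j$ attack $e_v^j$; both variants work and the surrounding argument (nonempty admissible sets have exactly one member per class and form a clique, and the clique set is simultaneously stable, complete, preferred and semi-stable) is identical. Your \Adjust{} gadget differs from the paper's, which simply adds one argument $t$ mutually attacking all of $X$, sets $E_0=\{t\}$, and asks to remove $t$; your version, in which $t$ must be \emph{added} and is guarded by per-class forcers and by the dummy anchors, is a legitimate alternative and the verification you sketch goes through.

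The \Center{} part, however, has a genuine gap. With $E_1=\{a_{v_1^1},\dots,a_{v_k^1}\}$ and $E_2=\{a_{v_1^2},\dots,a_{v_k^2}\}$ built from two disjoint families of universally adjacent dummies, the mixed set $E=\{a_{v_1^1},\dots,a_{v_{k/2}^1},a_{v_{k/2+1}^2},\dots,a_{v_k^2}\}$ is a $k$-clique of the enlarged graph and hence a stable (so also admissible, complete, preferred and semi-stable) extension whose distance to each $E_i$ is $k<2k$. The instance you describe is therefore a Yes-instance irrespective of whether $G$ has a $k$-clique, and the phrase ``attacks linking each anchor to a shared trigger structure'' does not specify any mechanism that excludes such mixed anchors; note that, unlike in \Adjust{}, \Center{} has no designated target argument that you can force into the centre extension. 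This is precisely the difficulty the paper's construction is designed to overcome: it adds self-attacking arguments $z_1',\dots,z_k'$ that attack all of $X$ and are attacked only by $\{t,t'\}\cup W\cup W'$, together with arguments $z_1,\dots,z_k$ that attack $W\cup W'$ and are attacked only by $X\cup\{t,t'\}$, so that any admissible centre must contain both a nonempty $\sigma$-extension of the original $F$ and exactly one of $w_i,w_i'$ for each $i$; a counting step (either $|W\setminus E'|\ge k/2$ or $|W'\setminus E'|\ge k/2$) then converts the distance budget into the bound $|E'\cap X|\le k$. You would need an analogous two-way forcing gadget and budget analysis to complete your \Center{} reduction.
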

\proofsketchversion{
\begin{proof}[Sketch]
  Let $\sigma \in \{\ADM, \COM, \PRF, \SEM, \STB \}$. 
We first give an fpt-reduction from \MCC{}  to
  \Small{}. Let
  $(G,k)$ be an instance of \MCC{} with partition $V_1,\dots,V_k$. We
  construct in fpt-time an AF $F$ such that there is an $E \in
  \sigma(F)$ with $|E|=k$ if and only if $G$ has a $k$-clique. The AF
  $F$ contains the following arguments: (1) $1$ argument $y_v$ for
  every $v \in V(G)$ and (2) for every $1 \leq i \leq k$, 
  for every $v \in V_i$, and for every $1 \leq j \leq k$ with $j \neq i$, 
  $1$ argument $z_v^j$.
  For every $1 \leq i <j \leq k$, we denote by $Y[i]$ the set of
  arguments $\SB y_v \SM v \in V_i \SE$ and by $Z[i,j]$ the set of
  arguments $\SB z_v^j \SM v \in V_i \SE$. Furthermore, we set
  $Y:=\bigcup_{1 \leq i \leq k}Y[i]$ and $Z:=\bigcup_{1 \leq i < j
    \leq k}Z[i,j]$. For every $1 \leq i \leq k$, the AF $F$ contains
  the following attacks:
  \begin{itemize}
  \item $1$ attack from $y_v$ to $y_u$ for every $u,v \in Y[i]$ with
    $u \neq v$;
  \item $1$ self-attack for all arguments in $Z$;
  \item For every $v \in V_i$, $1$ attack from $z_v^j$ to $y_v$ for
    every $1 \leq j \leq k$ with $j\neq i$;
  \item For every $v \in V_i$, $1$ attack from $y_v$ to $z_u^j$
    for every $u \in V_i \setminus \{v\}$ and $1 \leq j \leq k$ with
    $j \neq i$.
  \item For every $\{u,v\} \in E(G)$ with $u \in V_i$ and $v \in V_j$, 
    $1$ attack from $y_u$ to $z_v^i$ and $1$ attack from $y_v$ to
    $z_u^j$.
  \end{itemize}
  This completes the construction of $F$. It is straightforward to verify that
  $G$ has a $k$-clique if and only if there is an $E \in \sigma(F)$
  with $|E|=k$. 
  This shows $\W[1]$\hy hardness of
  \Small{}. To obtain $\W[1]$\hy hardness of
  \Repair{} we note that $(F,\emptyset,k)$
  is a \textsc{Yes}-instance for \Repair{} if and only if $(F,k)$ is a
  \textsc{Yes}-instance for \Small{}.

  The $\W[1]$\hy hardness of \Adjust{} and \Center{} follow by
  fpt-reductions from \Small{}.
\end{proof}
}
Since the fpt-reductions used in the proof of
Theorem~\ref{the:W1-hard} can be computed in polynomial time, and
since the unparameterized version of \MCC{} is NP-hard, it follows
that the unparameterized versions of the four problems mentioned in
Theorem~\ref{the:W1-hard} are also NP-hard.
\fullproofversion{
We will have shown Theorem~\ref{the:W1-hard} after showing the
following $3$ Lemmas.
\begin{LEM}\label{lem:small-w1hard1} 
  Let $\sigma \in \{\ADM, \COM, \PRF, \SEM, \STB \}$. Then the problems \Small{} and
  \Repair{} are $\W[1]$-hard.
\end{LEM}
\begin{proof}
  We start by showing the lemma for the problem \Small{} by giving 
  an fpt-reduction from the \MCC{} problem to the
  \Small{} problem, when $\sigma$ is one of the listed semantics. Let
  $(G,k)$ be an instance of \MCC{} with partition $V_1,\dots,V_k$. We
  construct in fpt-time an AF $F$ such that there is an $E \in
  \sigma(F)$ with $|E|=k$ if and only if $G$ has a $k$-clique. The AF
  $F$ contains the following arguments: (1) $1$ argument $y_v$ for
  every $v \in V(G)$ and (2) for every $1 \leq i \leq k$, for every $v
  \in V_i$, and for every $1 \leq j \leq k$ with $j \neq i$, $1$
  argument $z_v^j$.

  For every $1 \leq i <j \leq k$, we denote by $Y[i]$ the set of
  arguments $\SB y_v \SM v \in V_i \SE$ and by $Z[i,j]$ the set of
  arguments $\SB z_v^j \SM v \in V_i \SE$. Furthermore, we set
  $Y:=\bigcup_{1 \leq i \leq k}Y[i]$ and $Z:=\bigcup_{1 \leq i < j
    \leq k}Z[i,j]$. For every $1 \leq i \leq k$, the AF $F$ contains
  the following attacks:
  \begin{itemize}
  \item $1$ attack from $y_v$ to $y_u$ for every $u,v \in Y[i]$ with
    $u \neq v$;
  \item $1$ self-attack for all arguments in $Z$;
  \item For every $v \in V_i$, $1$ attack from $z_v^j$ to $y_v$ for
    every $1 \leq j \leq k$ with $j\neq i$;
  \item For every $v \in V_i$, $1$ attack from $y_v$ to $z_u^j$
    for every $u \in V_i \setminus \{v\}$ and $1 \leq j \leq k$ with
    $j \neq i$.
  \item For every $\{u,v\} \in E(G)$ with $u \in V_i$ and $v \in V_j$, 
    $1$ attack from $y_u$ to $z_v^i$ and $1$ attack from $y_v$ to
    $z_u^j$.
  \end{itemize}
  This completes the construction of $F$. It remains to show that
  $G$ has a $k$-clique if and only if there is an $E \in \sigma(F)$
  with $|E|=k$. If $Q \subseteq V(G)$ we denote by $Y_Q$ the set of
  arguments $\SB y_q \SM q \in Q \SE$. We need the following claim.
  \begin{CLM}\label{clm:clkadm} A set $Q\subseteq V(G)$ is a $k$-clique
    in $G$ if and only if $Y_Q\in \ADM(F)$ and $Y_Q\neq \emptyset$.
  \end{CLM}
  Suppose that $Q \subseteq V(G)$ is a $k$-clique in $G$. Then $Y_Q$
  contains exactly $1$ argument from $Y[i]$ for every $1 \leq i \leq
  k$. Because there are no attacks between arguments in $Y[i]$ and
  $Y[j]$ for every $1 \leq i <j \leq k$ it follows that $Y_Q$ is
  conflict-free. To see that $Y_Q$ is also admissible let $y_v \in Y_Q
  \cap V_i$ and suppose that $y_v$ is attacked by an argument $x$ of
  $F$. It follows from the construction of $F$ that either $x \in
  Y[i]$ or $x \in \SB z_v^j \SM 1 \leq j \leq k \textup{ and }j\neq
  i\SE$. In the first case $x$ is attacked by $y_v$. In the second
  case $z_v^j$ is attacked by the argument in $Y[j] \cap Y_Q$ because
  $Q$ is a $k$-clique of $G$. Hence, $Y_Q \in \ADM(F)$ and $Y_Q \neq
  \emptyset$, as required.
  
  For the opposite direction, suppose that $E \in \ADM(F)$ and $E \neq
  \emptyset$. Because $E$ conflict-free it follows that $E \subseteq
  Y$ and $E$ contains at most $1$ argument from the set $Y[i]$
  for every $1 \leq i \leq k$. Because $E \neq \emptyset$ there is an
  argument $y_v \in Y[i] \cap E$. Because of the construction of $F$,
  $y_v$ is attacked by the arguments $\SB z_v^j \SM 1 \leq j \leq k
  \textup{ and } j \neq i \SE$. Hence, the arguments $\SB z_v^j \SM 1 \leq j \leq k
  \textup{ and } j \neq i \SE$ need to be attacked by arguments in
  $E$. However, the only arguments of $F$ that attack an argument
  $z_v^j$ with $j \neq i$ are the arguments $y_u \in Y[j]$ such that
  $\{u,v\} \in E(G)$. Hence, for every argument $y_v \in E \cap Y[i]$
  and every $1 \leq j \leq k$ with $j \neq i$
  there is an argument $y_u \in E \cap Y[j]$ such that $\{u,v\} \in
  E(G)$. It follows that the set $\SB v \SM y_v \in E \SE$ is a
  $k$-clique in $G$. This shows the claim.

  The previous claim shows that every non-empty admissible extension
  of $F$ corresponds to a $k$-clique of~$G$. It is now straightforward
  to check that every such extension is not only admissible but also
  complete, preferred, semi-stable, and stable. This shows the lemma
  for $\Small$. To show the Lemma for the \Repair{} problem 
  we note that $(F,\emptyset,k)$ is a
  \textsc{Yes}-instance for \Repair{} if and only if $(F,k)$ is a
  \textsc{Yes}-instance for \Small{}.
\end{proof}
\begin{LEM}\label{lem:adjust-w1hard} 
  Let $\sigma \in \{\ADM, \COM, \PRF, \SEM, \STB \}$. Then the problem \Adjust{} is
  $\W[1]$-hard.
\end{LEM}
\begin{proof} 
  We give an fpt-reduction from the \Small{} problem. Let $(F,k)$ be
  an instance of the \Small{} problem where $F=(X,A)$. We construct an
  equivalent instance $(F',E_1,E_2)$ of the \Adjust{} problem as
  follows. $F'=(X',A')$ is obtained from $F$ by adding $1$ argument
  $t$ and $2$ attacks $(t,x)$ and
  $(x,t)$ for every $x \in X$ to $F$. Because the argument $t$ attacks
  is attacked by all arguments in $X$ it follows that
  $\{t\}$ is a $\sigma$-extension of $F'$. In is now straightforward
  to show that $(F',\{t\},t,k+1)$ is a \textsc{Yes}-instance of
  \Adjust{} if and only if $(F,k)$ is a \textsc{Yes}-instance of
  \Small{}. This shows the lemma.
 \end{proof}
\begin{LEM}\label{lem:center-w1hard} 
  Let $\sigma \in \{\ADM, \COM, \PRF, \SEM, \STB\}$. Then the problem
  $\sigma$-Center is $\W[1]$-hard.
\end{LEM}
\begin{proof} 
  We give an fpt-reduction from the \Small{} problem. Let $(F,k)$ be
  an instance of the \Small{} problem where $F=(X,A)$. W.l.o.g. we can
  assume that $k$ is even. This follows from the remark in
  Section~\ref{sec:preliminaries} that \MCC{} is \W[1]-hard even if
  $k$ is even and the parameter preserving reduction from \MCC{} to
  \Small{} given in Lemma~\ref{lem:small-w1hard1}.
  We will construct an
  equivalent instance $(F',E_1,E_2)$ of the \Center{} problem as
  follows. $F'=(X',A')$ is obtained from $F$ by adding the following
  arguments and attacks to $F$.
  \begin{itemize}
  \item $2$ arguments $t$ and $t'$;
  \item the arguments in $W:=\{w_1,\dots,w_{k}\}$ and $W':=\{w_1',\dots,w_{k}'\}$;
  \item the arguments in $Z:=\{z_1,\dots,z_k\}$ and $Z':=\{z_1',\dots,z_k'\}$;
  \item attacks from $t$ to all arguments in $X \cup
    \{t'\} \cup Z \cup Z'$ 
    and attacks from $t'$ to all arguments in $X \cup \{t\} \cup Z
    \cup Z'$;
  \item attacks from $w_i$ to $\{t,w_i'\}$ and attacks from $w_i'$ to
    $\{t',w_i\}$ for every $1 \leq i \leq k$;
  \item self-attacks for the arguments $z_1,\dots,z_k$ and $z_1',\dots,z_k'$;
  \item attacks from $z_i$ to $\{w_i,w_i'\}$ and from $X$ to $z_i$ for
    every $1 \leq i \leq k$;
  \item attacks from $\{w_i,w_i'\}$ to $z_i'$ and from $z_i'$ to $X$
    for every $1 \leq i \leq k$;
  \end{itemize}
  We set $E_0:=\{w_1,\dots,w_{k/2},w_{k/2+1}',\dots,w_k'\}$,
  $E_1:=\{t\}\cup W'$, $E_2:=\{t'\}\cup W$, and
  $k':=\bsdist(E_1,E_2)-1=2(k+1)-1=2k+1$. Then $E_1$ and
  $E_2$ are $\sigma$-extensions and hence $(F',E_1,E_2)$ is a valid
  instance of the \Center{} problem. It remains to show that $(F,k)$
  is a \textsc{Yes} instance of \Small{} if and only if $(F',E_1,E_2)$
  is a \textsc{Yes} instance of \Center{}.

  Suppose that $(F,k)$ is a \textsc{Yes} instance of \Small{} and let
  $E$ be a non-empty $\sigma$-extension of cardinality at most $k$ witnessing
  this. Then $E':=E \cup E_0$ is a $\sigma$-extension of $F'$ and
  $\bsdist(E',E_i)=k+k+1=2k+1\leq k'$ for $i \in \{1,2\}$, as
  required.

  For the reverse direction suppose that $E'$ is a $\sigma$-extension
  of $F'$ with $\bsdist(E',E_i)\leq k'$ for $i \in \{1,2\}$. We need
  the following claim.
  \begin{CLM}
    $E'$ does not contain $t$ or $t'$.
  \end{CLM}
  Suppose for a contradiction that $E'$ contains one of $t$ and
  $t'$. Because $t$ and $t'$ attack each other $E'$ cannot contain
  both $t$ and $t'$. W.l.o.g. we can assume that $t \in E'$. Because $E'$
  is a $\sigma$-extension $E'$ is also admissible. Since, 
  the arguments $w_1,\dots,w_k$ attack $t$, there need to be arguments
  in $E'$ that attack these arguments. It follows that $E'$ contains
  the arguments $w_1',\dots,w_k'$. But then
  $\bsdist(E',E_2)\geq\bsdist(E_1,E_2)$ a contradiction.
  \begin{CLM}
    $E' \cap X$ is a non-empty $\sigma$-extension of $F$ and
    $E'$ contains exactly one of the arguments $w_i$ and $w_i'$ for
    every $1 \leq i \leq k$.
  \end{CLM}
  It follows from the previous claim that $E'$ does not contain $t$ or
  $t'$. Furthermore, because of the self-loops of the arguments in $Z
  \cup Z'$, $E'$ contains only arguments from $X \cup W \cup
  W'$. Since the arguments in $X$ do not attack or are
  attacked by arguments in $W \cup W'$ it follows that $E' \cap X$ is
  a $\sigma$-extension of $F$. To see that $E' \cap X$ is also not
  empty, suppose for a contradiction that this is not the case.
  Then because $E'$ is non-empty, $E'$
  has to contain at least $1$ argument from $W \cup W'$. However, any
  argument in $W \cup W'$ is attacked by an argument in $Z$ and the
  only arguments that attack arguments in $Z$ are the arguments in $X
  \cup \{t,t'\}$. Again using the previous claim and the
  fact that $E'$ is admissible, it follows that $E'$ has to contain at
  least $1$ argument from $X$, as required. 
  It remains to show that
  $E'$ contains exactly one of $w_i$ and $w_i'$ for every $1 \leq i
  \leq k$. Because $E'$ contains at least $1$ argument from $X$ and
  all arguments in $X$ are attacked by all arguments in $Z'$, $E'$
  needs to contain arguments that attack all arguments in $Z'$. However,
  the only arguments that attack arguments in $Z'$ are the arguments
  in $\{t,t'\}\cup W \cup W'$. Using the previous claim it follows
  that the only way for $E'$ to attack all arguments in $Z'$ is to
  contain at least $1$ of $w_i$ and $w_i'$ for every $1 \leq i \leq
  k$. The claim now follows by observing that because $E'$ is
  conflict-free, it cannot contain both arguments $w_i$ and $w_i'$ for
  any $1 \leq i \leq k$. This proves the claim.

  Since $E'$ contains exactly $1$ of $w_i$ and $w_i'$ for every $1
  \leq i \leq k$ we obtain that either $|W\setminus E'|\geq k/2$ or
  $|W'\setminus E'|\geq k/2$. W.l.o.g. we can assume that $|W\setminus
  E'|\geq k/2$. But then $\bsdist(E',E_2)=|E' \cap X|+1+2|W  \setminus
  E'|=|E' \cap X|+k+1$ and because $\bsdist(E',E_2)\leq k'=2k+1$ it
  follows that $|E' \cap X|\leq k$. This concludes the proof of the
  lemma.
\end{proof}
This concludes the proof of Theorem~\ref{the:W1-hard}.
}

\bigskip\noindent
In the next section we will show that, when considering AFs of bounded
maximum degree, then fixed-parameter tractability can be obtained for the
admissible, complete, and stable semantics. Unfortunately, this
positive result does not hold for the preferred and semi-stable
semantics as the following result shows.
\begin{THE}\label{the:coNP-hard}
  Let $\sigma \in \{\PRF,\SEM\}$. Then the problems
  \Small{}, \Repair{}, \Adjust{}, \Center{} are $\paracoNP$-hard, even
  for AFs of maximum degree $5$.
\end{THE}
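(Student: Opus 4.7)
\medskip

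The plan is to use the fact that verification of preferred (and semi-stable) extensions is inherently coNP-hard because of the maximality requirement, and to show that this hardness survives in the bounded-degree, small-parameter setting. I would reduce from a canonical coNP-hard problem -- say \textsc{3-UnSat} restricted to 3-CNF formulas in which every variable occurs in a bounded number of clauses (a restriction that is still coNP-hard) -- to \Small{} with $\sigma\in\{\PRF,\SEM\}$, the parameter fixed to some small constant $k_0$, and the constructed AF of maximum degree at most $5$. Once \Small{} is handled, the hardness of the remaining three problems transfers essentially via the same arguments used in the $\W[1]$-hardness proofs (Lemmas~\ref{lem:small-w1hard1}--\ref{lem:center-w1hard}), with the extra care that the hub-like constructions there must be replaced by bounded-degree gadgets.

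More concretely, given a bounded-occurrence 3-CNF $\phi$, I would construct $F_\phi$ as follows. For each variable $x_i$ occurring $d_i$ times, introduce ``copies'' $x_i^{1},\neg x_i^{1},\dots,x_i^{d_i},\neg x_i^{d_i}$ arranged in a short chain with local mutual attacks so that in every admissible set all selected copies of $x_i$ agree (i.e., the chain forces a consistent truth assignment), while each individual argument has only a constant number of neighbours. For each clause $C_j$ introduce an argument $c_j$ attacked by the corresponding literal-copies. Finally, attach a distinguished ``default'' argument $s$ via a small bounded-degree gadget (for instance through a self-loop attacker, plus auxiliary arguments linking $s$ to the clause gadgets) designed so that: (i) $\{s\}$ is always admissible; (ii) $\{s\}$ can be strictly extended to a larger admissible set iff some assignment simultaneously satisfies all clauses, i.e.\ iff $\phi$ is satisfiable; and (iii) every non-empty preferred (respectively semi-stable) extension that is not $\{s\}$ has size $\Omega(n)$, coming from a full truth assignment. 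Choosing $k_0$ smaller than this lower bound then gives: $F_\phi$ has a non-empty preferred extension of size $\le k_0$ iff $\phi$ is unsatisfiable, establishing $\paracoNP$-hardness of \Small{} for $\sigma=\PRF$. For $\sigma=\SEM$ essentially the same argument works, using that in the satisfiable case the maximal-range extensions continue to correspond to satisfying assignments and hence remain large.

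The remaining three problems are then reduced to from \Small{}. The reduction to \Repair{} is immediate: $(F_\phi,\emptyset,k_0)$ is a Yes-instance of \Repair{} iff $(F_\phi,k_0)$ is a Yes-instance of \Small{}. For \Adjust{} and \Center{} I would adapt the constructions from Lemmas~\ref{lem:adjust-w1hard} and~\ref{lem:center-w1hard}, but since those constructions use arguments (like $t$ or the hubs $t,t'$) that attack and are attacked by every argument of the underlying AF and would thus blow up the maximum degree, I would replace each such hub by a tree-structured bounded-degree gadget that behaves equivalently: a binary tree of auxiliary arguments whose root behaves like $t$ with respect to admissibility/maximality, but whose every node has only $O(1)$ neighbours. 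The correctness proofs proceed exactly as in Lemmas~\ref{lem:adjust-w1hard} and~\ref{lem:center-w1hard}, with the gadget simulating the original hub.

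The main obstacle is the simultaneous enforcement of the three properties of $F_\phi$ under the degree-$5$ constraint. Keeping the variable-copy chain consistent (so that admissible sets correspond to proper assignments rather than to ``half-assignments'' that flip along the chain) while still allowing exactly the right extensions containing $s$ to be maximal is the delicate part; the usual satisfiability encodings make the central variable arguments attack every clause they occur in, which violates the degree bound. The secondary obstacle is replacing the universal-attack hubs in the $\Adjust$/$\Center$ reductions by bounded-degree tree gadgets without changing the set of preferred/semi-stable extensions in ways that break the distance calculations, which will require a careful ad hoc analysis of admissibility inside the gadget.
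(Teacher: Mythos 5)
Your plan matches the paper's proof in all essentials: it reduces from unsatisfiability of 3-CNF formulas with bounded literal occurrence, builds an AF in which a designated default argument forms a singleton preferred (semi-stable) extension exactly when the formula is unsatisfiable (so the maximality condition does the coNP work), bounds the degree of the high-degree hub arguments via subdivided tree gadgets with self-attacking subdivision nodes, and transfers hardness to \Repair{} via the empty starting set. The only notable difference is that the paper sidesteps the two obstacles you flag: the bounded-occurrence restriction already bounds the degree of the literal arguments (no variable-copy chains are needed, only the two hubs $\Phi$ and $\overline{\Phi}$ require tree gadgets), and for \Adjust{} and \Center{} it designs fresh constant-size gadgets that attach only to the single argument $\Phi$ rather than simulating the universal-attack hubs of the \W[1]-hardness reductions.
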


The remainder of this section is devoted to the proof of Theorem~\ref{the:coNP-hard}.
\begin{LEM}\label{the:small-paracoNP} 
  Let $\sigma \in \{\PRF, \SEM\}$. Then the problems \Small{} and \Repair{}
  are \paracoNP-hard (for parameter equal to $1$), even for AFs of maximum
  degree at most $5$.
\end{LEM}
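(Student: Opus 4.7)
The plan is to show $\paracoNP$-hardness by an fpt-reduction from 3-UNSAT, restricted to instances where every variable occurs in a bounded number of clauses (still $\coNP$-complete), to $\sigma$-\Small{} with the parameter $k$ fixed to~$1$. Hardness for $\sigma$-\Repair{} then follows for free by taking $S=\emptyset$, because with this choice \Repair{} syntactically coincides with \Small{}.

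Given a 3-CNF formula $\phi$, I will build an AF $F_\phi$ of maximum degree at most $5$ such that $F_\phi$ has a non-empty $\sigma$-extension of size $1$ iff $\phi$ is unsatisfiable. The construction is in three layers. First, the standard SAT gadget: for every variable $x_i$, a mutually attacking pair $x_i,\bar x_i$; for every clause $C_j$, a clause argument $c_j$ attacked by its literal arguments. By the bounded-occurrence restriction, each literal argument has constant degree. Second, a collector argument $r$ playing the role of the usual universal argument attacked by all $c_j$'s, but implemented by a tree of fresh constant-degree auxiliaries, wired so that an admissible set $S$ contains $r$ iff every $c_j$ is attacked by $S$, i.e., iff $S$ encodes a satisfying assignment of $\phi$. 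Third, a witness argument $t$ attached to the rest of $F_\phi$ by a small constant-degree sub-gadget that (a)~makes $\{t\}$ admissible unconditionally (for instance via a self-looping attacker that $t$ defeats), and (b)~forces every admissible strict superset of $\{t\}$ to also contain $r$.

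With this setup the analysis splits in two. If $\phi$ is unsatisfiable, no admissible set strictly contains $\{t\}$, so $\{t\}$ is preferred; moreover, any other singleton $\{y\}$ that happens to be admissible in $F_\phi$ can be enlarged either by a compatible literal argument or by $t$, so $\{t\}$ is in fact the unique size-$1$ preferred extension, yielding a \textsc{Yes}-instance of \Small{}. If $\phi$ is satisfiable, combining $\{t\}$ with $r$ and the literal arguments of a satisfying assignment produces an admissible strict superset of $\{t\}$, and analogous enlargements defeat every other candidate singleton, so $F_\phi$ has no size-$1$ preferred extension. For $\sigma=\SEM$, a slight reinforcement of the $t$-gadget ensures that the range $\{t\}^+$ in the UNSAT case is maximal among admissible sets, so the same equivalence goes through for semi-stable semantics as well.

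The main obstacle is enforcing the max-degree-$5$ bound while preserving the semantic equivalence ``$r\in S$ iff $\phi$ is satisfiable,'' since the natural one-argument collector has unbounded degree and must be replaced by a tree of auxiliaries; care is needed so that this replacement neither breaks the equivalence nor introduces spurious size-$1$ preferred or semi-stable extensions. A secondary difficulty is handling $\sigma=\SEM$ uniformly, which requires the range of $\{t\}$ to dominate the ranges of all other admissible sets of $F_\phi$ in the unsatisfiable case, and thus a careful choice of how the tree-of-auxiliaries and the $t$-gadget distribute their attacks over the rest of the framework.
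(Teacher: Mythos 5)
Your overall strategy is the one the paper uses: reduce from bounded-occurrence 3-UNSAT with the parameter fixed to a constant, collect the clause arguments into a single ``collector'' argument whose membership in an admissible set certifies satisfiability, break the high-degree hubs into trees of self-attacking auxiliaries, add a distinguished argument whose singleton is the sought size-one extension exactly in the unsatisfiable case, and get \Repair{} for free via $S=\emptyset$. So the route is essentially the paper's. However, there is one genuine gap in your construction: nothing in it prevents a literal argument from sitting in an admissible set ``for free.'' In your layer one, $x_i$ is attacked only by $\overline{x_i}$ (and by its at most two clause arguments, which it counter-attacks), so $\{x_i\}$ and, worse, $\{t,x_i\}$ are admissible regardless of whether $\phi$ is satisfiable. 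That kills the unsatisfiable direction: $\{t\}$ is then never preferred (nor semi-stable), and the preferred extensions have size about $n+1$, so your reduction answers \textsc{No} on unsatisfiable inputs. You do state requirement (b), that every admissible strict superset of $\{t\}$ must contain $r$, but you assign its enforcement to ``a small constant-degree sub-gadget'' local to $t$. A gadget of constant size touching only $t$'s neighbourhood cannot constrain the $2n$ literal arguments, which are not adjacent to it; the requirement is inherently global.

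The paper's fix is a second hub $\overline{\Phi}$: a self-attacking argument that attacks every literal argument and whose only other attacker is the collector $\Phi$. Then any admissible set containing a literal must attack $\overline{\Phi}$, hence must contain $\Phi$, hence must encode a satisfying assignment. Crucially, this hub also has unbounded degree and needs its own degree-reduction tree, oriented \emph{away} from the root and with \emph{all} original tree vertices self-attacking (so that the tree broadcasts the attack on the literals without creating new defendable arguments), which is a different orientation and self-loop placement than the in-tree you describe for the collector. Your proposal mentions only the collector's tree. Once you add the $\overline{\Phi}$ gadget and its out-tree, the rest of your argument (including the semi-stable case, where the paper simply uses that the witness argument's singleton is the unique nonempty admissible set in the unsatisfiable case) goes through as you sketched.
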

\begin{proof} 
  We will show the theorem by providing a polynomial reduction from
  the \textsc{$3$-CNF-$2$-UnSatisfiablily} problem which is well-known to be
  $\coNP$-hard~\cite{GareyJohnson79}. The
  \textsc{$3$-CNF-$2$-UnSatisfiablily} problem ask whether a given
  $3$-CNF-$2$ formula $\Phi$, i.e., $\Phi$ is a CNF formula where
  every clause contains at most $3$ literals and every literal occurs in at most $2$
  clauses, is not satisfiable. Let $\Phi$ be a such a $3$-CNF-$2$
  formula with clauses $C_1,\dots,C_m$ and variables $x_1,\dots,x_n$. 
  We will (in polynomial time) construct an AF $F=(X,A)$ such that (1)
  $F$ has degree at most $5$ and (2) $\Phi$ is not
  satisfiable if and only if there is an $E \in \sigma(F)$ with
  $|E|=1$. This implies the theorem.

  $F$ contains the following arguments: (1) $2$ arguments $\Phi$ and
  $\overline{\Phi}$, (2) $1$ argument $C_j$ for every $1 \leq j \leq m$,
  (3) $2$ arguments $x_i$ and $\overline{x_i}$ for every $1 \leq i \leq n$,
  and (4) $1$ argument $e$. Furthermore, $F$ contains the following
  attacks: (1) $1$ self-attack for the arguments
  $\overline{\Phi}$ and $C_1,\dots,C_m$, (2) $1$ attack from $\Phi$ to
  $\overline{\Phi}$, (3) $1$ attack from $C_j$ to $\Phi$ for every $1 \leq
  j \leq m$, (4) $1$ attack from $x_i$ to $C_j$ for every $1 \leq i
  \leq n$ and $1 \leq j \leq m$ such that $x_i \in C_j$, (5) $1$
  attack from $\overline{x_i}$ to $C_j$ for every $1 \leq i
  \leq n$ and $1 \leq j \leq m$ such that $\overline{x_i} \in C_j$, (6) $2$
  attacks from $x_i$ to $\overline{x_i}$ and from $\overline{x_i}$ to $x_i$ for
  every $1 \leq i \leq n$, and (7) $2$ attacks from $\overline{\Phi}$ to
  $x_i$ and to $\overline{x_i}$ for every $1 \leq i \leq n$.

  Note that the constructed AF $F$ does not have bounded
  degree. Whereas all arguments in $X \setminus \{\Phi,\overline{\Phi}\}$
  have degree at most $5$, the degree of the arguments $\Phi$ and
  $\overline{\Phi}$ can be unbounded. However, the following simple trick can be used
  to transform $F$ into an AF with bounded degree. 

  Let $B(i)$ be an undirected rooted binary tree with
  root $r$ and $i$ leaves $l_1,\dots,l_i$ and let $B'(i)$ be obtained from
  $B(i)$ after subdividing every
  edge of $B(i)$ once, i.e., every edge $\{u,v\}$ is replaced with $2$
  edges $\{u,n_{uv}\}$ and $\{n_{uv},v\}$ where $n_{uv}$ is a new
  vertex for every such edge. We denote by $B(\Phi)$ the rooted
  directed tree obtained from $B'(m)$ after directing every edge of $B'(m)$
  towards the root $r$ and introducing a self-attack for every vertex
  in $V(B'(m)) \setminus V(B(m))$, i.e., all vertices introduced for
  subdividing edges of $B(m)$ are self-attacking in $B(\Phi)$. Then to
  ensure that the
  argument $\Phi$ has bounded degree in $F$ we first delete the attacks from
  the arguments $C_1,\dots,C_m$ to $\Phi$ in $F$. We then add a copy of
  $B(\Phi)$ to $F$ and identify $\Phi$ with the root~$r$. Finally, we
  add $1$ attack from $C_j$ to $l_j$ for every $1 \leq j \leq m$. 
  Observe that this construction maintains the property of $F$ that if
  a $\sigma$-extension of $F$ contains $\Phi$ then it also has to
  contain at least $1$ attacker of every argument $C_1,\dots,C_m$.
  
  Let $B(\overline{\Phi})$ be the rooted directed tree obtained from
  $B'(2n)$ after directing every edge of $B'(2n)$
  away from the root $r$ and introducing a self-attack for every vertex
  in $V(B(2n))$. To ensure that also the argument $\overline{\Phi}$ has bounded
  degree we first delete the attacks from
  the argument $\overline{\Phi}$ to $x_1,\overline{x_1},\dots,x_n,\overline{x_n}$ in
  $F$. We then add a copy of
  $B(\overline{\Phi})$ to $F$ and identify $\overline{\Phi}$ with the root $r$. Finally, we
  add $2$ attacks from $l_{i}$ to $x_i$ and from $l_{n+i}$ to
  $\overline{x_i}$ for every $1\leq i \leq n$. 
  Observe that this construction maintains the property of $F$ that if
  a $\sigma$-extension of $F$ contains $x_i$ or $\overline{x_i}$ for some 
  $1 \leq i \leq n$ then $\overline{\Phi}$ needs to be attacked by the
  argument $\Phi$ in $F$ and hence such a $\sigma$-extension has to
  contain the argument $\Phi$.

  Clearly, after applying the above transformations to $F$ the
  resulting AF has maximum degree at most $5$. However, to make the
  remaining part of the proof less technical we will give the proof
  only for the AF $F$. We will need the
  following claim.
  \begin{CLM}\label{clm:bdeg-small-repair-1}\sloppypar
    If there is an $E \in \ADM(F)$ that contains at least $1$ argument
    in $\{\Phi,x_1,\overline{x_2},\dots,x_n,\overline{x_n}\}$ then $\Phi \in E$.
  \end{CLM}
  Let $E \in \ADM(F)$ with $E \cap \{\Phi,x_1,\overline{x_2},\dots,x_n,\overline{x_n}\}\neq
  \emptyset$. If $\Phi \in E$ then the claim holds. So suppose
  that $\Phi \notin E$. Then there is an $1 \leq i \leq n$ such that
  either $x_i \in E$ or $\overline{x_i} \in E$. Because both $x_i$ and
  $\overline{x_i}$ are attacked
  by the argument $\overline{\Phi}$ and the only argument (apart from
  $\overline{\Phi}$) that attacks $\overline{\Phi}$ in $F$ is $\Phi$ it follows
  that $\Phi \in E$. This shows the claim.

  \begin{CLM}\label{clm:bdeg-small-repair-2}\sloppypar
    There is an $E \in \ADM(F)$ that contains at least $1$ argument
    in $\{\Phi,x_1,\overline{x_2},\dots,x_n,\overline{x_n}\}$ if and only if the
    formula $\Phi$ is satisfiable.
  \end{CLM}
  Suppose there is an $E \in \ADM(F)$ with $E \cap
  \{\Phi,x_1,\dots,x_n\} \neq \emptyset$. Because of the previous
  claim we have that $\Phi \in E$. Because $\Phi \in E$ and $\Phi$ is
  attacked by the arguments $C_1,\dots,C_m$ it follows that the
  arguments $C_1,\dots,C_m$ must be attacked by some argument in
  $E$. Let $a(C_j)$ be an argument in $E$ that attacks $C_j$. Then
  $a(C_j)$ is an argument that corresponds to a literal of the clause
  $C_j$. Furthermore, because $E$ is conflict-free the set $L:=\SB a(C_j)
  \SM 1 \leq j \leq m \SE$ does not contain arguments that correspond
  to complementary literals. Hence, $L$ corresponds to a satisfying
  assignment of $\Phi$.

  For the reverse direction suppose $\Phi$ is satisfiable and let $L$
  be a set of literals witnessing this, i.e., $L$ is a set of literals
  that correspond to a satisfying assignment of $\Phi$. It is
  straightforward to check that $E:=\{\Phi\} \cup L$ is in $\ADM(F)$.
  This completes the proof of the claim.

  \begin{CLM}\label{clm:bdeg-small-repair-3}
    Let $E \in \sigma(F)$. Then $e \in E$.
  \end{CLM}
  This follows directly from our assumption that $\sigma \in
  \{\PRF,\SEM\}$ and the fact that the argument $e$ is isolated in
  $F$.

  We are now ready to show that $\Phi$ is not satisfiable if and only
  if there is an $E \in \sigma(F)$ with $|E|=1$. 
  So suppose that
  $\Phi$ is not satisfiable. It follows from the previous claim that
  $E \cap \{\Phi,x_1,\overline{x_1},\dots,x_n,\overline{x_n}\}=\emptyset$ for
  every $E \in \ADM(F)$ and hence also for every $E \in
  \sigma(F)$. Because of the self-attacks of the arguments
  in $\{\overline{\Phi},C_1,\dots,C_m\}$, we obtain that $E \subseteq
  \{e\}$. Using the previous claim, we have $E=\{e\}$ as required.

  For the reverse direction suppose that there is an $E \in \sigma(F)$
  with $|E|=1$. Because of the previous claim it follows that
  $E=\{e\}$. Furthermore, because of the maximality condition of the preferred and
  semi-stable semantics it follows that there is no $E \in \ADM(F)$
  such that $E \cap \{\Phi,x_1,\overline{x_1},\dots,x_n,\overline{x_n}\} \neq
  \emptyset$ and hence (using Claim~\ref{clm:bdeg-small-repair-2}) the
  formula $\Phi$ is not satisfiable.
\end{proof}

\begin{LEM}\label{lem:adjust-paracoNP} 
  Let $\sigma \in \{\PRF, \SEM\}$. Then the problem \Adjust{} is
  \paracoNP-hard (for parameter equal to $2$) even if the maximum
  degree of the AF is bounded by $5$.
\end{LEM}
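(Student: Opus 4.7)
The plan is to polynomially reduce $\sigma$-\Small{} on bounded-degree AFs (shown $\paracoNP$-hard for parameter~$1$ in Lemma~\ref{the:small-paracoNP}) to $\sigma$-\Adjust{} on bounded-degree AFs with parameter~$2$, following the blueprint of Lemma~\ref{lem:adjust-w1hard}. Given a bounded-degree AF $F$ produced by Lemma~\ref{the:small-paracoNP} (with distinguished isolated argument~$e$ such that $\{e\}\in\sigma(F)$ precisely when the underlying $3$-CNF-$2$ formula~$\Phi$ is unsatisfiable), I would construct $F'$ by adding three new arguments $t,p,p'$, the attacks $(t,e),(e,t),(t,\Phi),(t,p),(e,p')$, and a self-attack on each of $p$ and $p'$. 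Every new argument has constant degree and only a constant number of new edges are incident to $e$ and $\Phi$, so $F'$ remains of bounded degree. The reduction outputs the \Adjust{} instance $(F',\{t\},t,2)$.

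For validity of the input, I would check that $\{t\}\in\sigma(F')$ irrespective of the satisfiability of~$\Phi$. Admissibility is immediate because $t$'s only attacker~$e$ is counter-attacked by~$t$. For preferred maximality, every candidate extension of~$\{t\}$ fails: adding $\Phi$ or~$e$ conflicts with~$t$ directly; adding~$p$ or~$p'$ is blocked by their self-attacks; adding a literal $x_i$ or $\overline{x_i}$ is impossible since $\overline{\Phi}$ is self-attacking and only attacked by~$\Phi$, so no literal is defended without $\Phi\in E$ (which would itself conflict with~$t$); and adding a tree-internal vertex from the $B(\Phi),B(\overline{\Phi})$ gadgets of Lemma~\ref{the:small-paracoNP} recursively forces the entire subtree down to the $C_j$-attacked leaves, which requires a satisfying assignment. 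Semi-stability of~$\{t\}$ uses the observation that $p$'s only non-self attacker is~$t$, so $p\in E^+$ forces $t\in E$; since $\{t\}$ is the only admissible set containing~$t$, no admissible set has range strictly containing $\{t\}^+=\{t,e,\Phi,p\}$.

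For correctness, if $\Phi$ is unsatisfiable then $\{e\}\in\sigma(F')$ by an entirely symmetric argument (with $p'$ playing the role of~$p$ for~$e$), and $\bsdist(\{t\},\{e\})=2$ with $t\in\{t\}\bigtriangleup\{e\}$, yielding a \textsc{Yes}-instance. If $\Phi$ is satisfiable then for any satisfying assignment~$L$ the set $\{e,\Phi,L\}$ is admissible in~$F'$ ($e$ defends~$\Phi$ against the new attacker~$t$, while~$L$ defends~$\Phi$ against the $C_j$-attackers as in Lemma~\ref{the:small-paracoNP}); this strictly extends~$\{e\}$ and has range strictly containing $\{e\}^+=\{e,t,p'\}$, so $\{e\}$ is neither preferred nor semi-stable. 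Since a straightforward case check shows that no other singleton of~$F'$ is even admissible (literal and $\Phi$-like arguments need several defenders; $p,p'$ are self-attacking; tree-internals need whole subtrees), no $\sigma$-extension of~$F'$ at distance~$\leq 2$ from~$\{t\}$ with~$t$ in the symmetric difference exists.

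The main technical subtlety lies in the role of the self-attacking auxiliary arguments $p,p'$: their self-attacks prevent them from being added to $\{t\}$ or~$\{e\}$ (preserving preferredness) while their uniquely determined non-self attackers ensure that no rival admissible set can match $\{t\}^+$ or $\{e\}^+$ without containing~$t$ or~$e$ respectively (granting semi-stability). I expect the main obstacle to be verifying carefully that this constant-size local gadget does not interact with the tree-based bounded-degree transformations of Lemma~\ref{the:small-paracoNP} in a way that introduces spurious small $\sigma$-extensions, but the locality of the new attacks should make this check routine.
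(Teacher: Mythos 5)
Your proposal is correct and follows essentially the same strategy as the paper's proof: starting from the bounded-degree AF of Lemma~\ref{the:small-paracoNP}, attach a constant-size gadget with a pivot argument $t$ so that $\{t\}$ is always a $\sigma$-extension, force any witness for \Adjust{} with parameter $2$ to be a rival singleton, and use self-attacking pendant arguments (your $p,p'$, the paper's $t_1',t_2'$) whose unique non-self attackers are the two poles to secure semi-stability. The only differences are cosmetic: the paper deletes $e$ and introduces a fresh second pole $t_2$ with a bidirectional attack between $t_1$ and $\Phi$, whereas you reuse $e$ as the second pole and attack $\Phi$ one-directionally, which works equally well.
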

\begin{proof} 
  We use a similar construction as in the proof of
  Theorem~\ref{the:small-paracoNP}. Let $F$ be the AF constructed from
  the $3$-CNF-$2$ formulas $\Phi$ as in the proof of
  Theorem~\ref{the:small-paracoNP}. Furthermore, let $F'$ be the AF
  obtained from $F$ after removing the argument $e$ and adding $4$
  novel arguments $t_1$, $t_1'$, $t_2$, and $t_2'$ and the attacks
  $(t_1,\Phi)$, $(\Phi,t_1)$, $(t_1,t_2)$, $(t_2,t_1)$, $(t_1,t_1')$,
  $(t_2,t_2')$, $(t_1',t_1')$, and $(t_2',t_2')$ to $F$. 
  Because $F$ has degree bounded by $5$ (and the degree of the
  argument $\Phi$ in $F$ is $3$) it follows that the maximum
  degree of $F'$ is $5$ as required. We claim that $(F',\{t_1\},t_1,2)$
  is a \textsc{Yes}-instance of \Adjust{} if and only if $\Phi$ is not
  satisfiable.

  It is straightforward to verify that the Claims~\ref{clm:bdeg-small-repair-1}
  and~\ref{clm:bdeg-small-repair-2} also hold for the AF $F'$.
  We need the following additional claims.
  \begin{CLM}\label{clm:bdeg-adjust-1}
    $\{t_1\} \in \sigma(F')$. 
  \end{CLM}
  Clearly, $\{t_1\} \in  \ADM(F')$. We first show that for every $E
  \in \ADM(F')$ with $t_1\in E$ it holds that $E=\{t_1\}$.
  Let $E \in \ADM(F')$ with $t_1 \in E$. Because of
  the attacks between $t_1$ and $t_2$ and
  between $t_1$ and $\Phi$ it follows that $\Phi,t_2 \notin
  E$. Using Claim~\ref{clm:bdeg-small-repair-1} it follows that 
  also none of the arguments in
  $\{x_1,\overline{x_1},\dots,x_n,\overline{x_n}\}$ are contained in $E$.
  Furthermore, because of the self-attacks in $F'$ it also holds
  that none of the arguments in
  $\{\overline{\Phi},C_1,\dots,C_m,t_1',t_2'\}$ are contained in $E$. 
  Hence, $E=\{t_1\}$, as required. This implies that $\{t_1\} \in
  \PRF(F')$. To show that $\{t_1\} \in \SEM(F')$ observe that 
  $t_1$ is the only argument in $F$ (apart from $t_1'$ itself) that
  attacks $t_1'$. Furthermore, because $t_1'$ attacks itself it cannot
  be in any semi-stable extension of $F'$. Hence, $\{t_1\} \in
  \SEM(F')$. This shows the claim.
  \begin{CLM}\label{clm:bdeg-adjust-2}
    $\{t_2\} \in \sigma(F')$ if and only if $\Phi$ is not satisfiable. 
  \end{CLM}
  \sloppypar Suppose that $\{t_2\} \in \sigma(F')$. If $\{t_2\} \in \PRF(F')$
  then there is no $E \in \ADM(F')$ with $\{t_2\} \subsetneq E$. It
  follows that there is no $E' \in \ADM(F')$ with $E' \cap
  \{\Phi,x_1,\overline{x_1},\dots,x_n,\overline{x_n}\} \neq \emptyset$, since
  such an $E'$ could be added to $E$. Using
  Claim~\ref{clm:bdeg-small-repair-2} it follows that $\Phi$ is not
  satisfiable. If on the other hand $\{t_2\} \in \SEM(F')$ then
  because $t_2$ is the only argument that attacks $t_2'$ and because
  of the self-attack of $t_2'$ it follows again that there is no $E \in
  \ADM(F')$ with $\{t_2\} \subsetneq E$. Hence, using the same
  arguments as for the case $\{t_2\} \in \PRF(F')$ we again obtain
  that $\Phi$ is not satisfiable. 

  For the reverse direction suppose that $\Phi$ is not
  satisfiable. Because of Claim~\ref{clm:bdeg-small-repair-2} we
  obtain that every $E \in \ADM(F')$ (and hence also every $E
  \in \sigma(F')$) contains no argument in
  $\{\Phi,x_1,\overline{x_1},\dots,x_n,\overline{x_n}\}$. Because $\{t_2\} \in
  \ADM(F')$ and the argument $t_2$ attacks the only remaining argument $t_1$ with no
  self-attack it follows that $\{t_2\} \in \sigma(F')$.

  To show the theorem it remains to show that there is an $E' \in
  \sigma(F')$ with $t_1 \notin E'$ and $\bsdist(E,E')\leq 2$ if and only
  if the formula $\Phi$ is not satisfiable. First observe that
  because of
  Claim~\ref{clm:bdeg-adjust-1}, $\emptyset \notin \sigma(F')$ and
  hence $E'$ must contain exactly $1$ argument other than $t_1$.
  Consequently, it remains to show that there is an argument $x \in X
  \setminus \{t_1\}$ such that $\{x\} \in \sigma(F')$ if and only if
  $\Phi$ is not satisfiable.

  Suppose that there is an $x \in X\setminus \{t_1\}$ with $\{x\} \in
  \sigma(F')$. If $x\in \{\Phi,x_1,\overline{x_1},\dots,x_n,\overline{x_n}\}$
  then because of Claim~\ref{clm:bdeg-small-repair-1} it holds that
  $x=\Phi$. However, assuming that $\Phi$ contains at least $1$ clause
  it follows that $\{x\}$ is not admissible, and hence $x \neq \Phi$.
  Considering the self-attacks of $F$ we obtain that $x=t_2$. Hence,
  the forward direction follows from Claim~\ref{clm:bdeg-adjust-2}.

  The reverse direction follows immediately from
  Claim~\ref{clm:bdeg-adjust-2}. This concludes the proof of the
  theorem.
\end{proof}

\begin{LEM}\label{lem:center-paracoNP} 
  Let $\sigma \in \{\PRF, \SEM\}$. Then the problem \Center{} is
  \paracoNP-hard (for parameter equal to $6$) even if the maximum
  degree of the AF is bounded by $5$.
\end{LEM}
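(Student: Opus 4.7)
The plan is to reduce from \Small{} for $\sigma \in \{\PRF, \SEM\}$ on AFs of maximum degree~$5$, which is $\paracoNP$-hard by Lemma~\ref{the:small-paracoNP}. Given an instance $(F,1)$ of \Small{} on such an AF, where $F$ encodes a $3$-CNF-$2$ formula $\Phi$ and $\{e\} \in \sigma(F)$ iff $\Phi$ is unsatisfiable, I will construct in polynomial time an AF $F'$ of maximum degree~$5$ together with two distinguished $\sigma$-extensions $E_1, E_2 \in \sigma(F')$ with $\bsdist(E_1, E_2) = 6$, such that $(F', E_1, E_2)$ is a \textsc{Yes}-instance of \Center{} iff $\Phi$ is unsatisfiable.

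I would couple $F$ with three parallel copies of the Adjust-style gadget of Lemma~\ref{lem:adjust-paracoNP}. For each $i \in \{1,2,3\}$, introduce four arguments $t_1^{(i)}, t_1'^{(i)}, t_2^{(i)}, t_2'^{(i)}$, mutual attacks $t_1^{(i)} \leftrightarrow t_2^{(i)}$, attacks $t_1^{(i)} \to t_1'^{(i)}$ and $t_2^{(i)} \to t_2'^{(i)}$, and self-attacks on $t_1'^{(i)}$ and $t_2'^{(i)}$; this forces every preferred/semi-stable extension to contain exactly one argument from each pair $\{t_1^{(i)}, t_2^{(i)}\}$. Each $t_1^{(i)}$ and each $t_2^{(i)}$ is then required to mutually attack the argument $\Phi$ of $F$ as well as all literal arguments $x_l, \overline{x_l}$, in order to block any satisfying-assignment superset of the designated extensions. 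All these attachments are routed through subdivided binary trees with self-attacking internal nodes, exactly as in the tree gadgets $B(\Phi), B(\overline{\Phi})$ of Lemma~\ref{the:small-paracoNP}, so that the maximum degree of $F'$ remains~$5$. The sets $E_1 := \{t_1^{(1)}, t_1^{(2)}, t_1^{(3)}\}$ and $E_2 := \{t_2^{(1)}, t_2^{(2)}, t_2^{(3)}\}$ then qualify as $\sigma$-extensions of $F'$, with $\bsdist(E_1, E_2) = 6$, matching the claimed parameter.

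The correctness analysis for the center proceeds as follows. For any candidate center $E$ satisfying $\bsdist(E, E_i) \leq 5$ for both $i \in \{1,2\}$, the identity $E_1 \triangle E_2 = \{t_1^{(i)}, t_2^{(i)} : i \in \{1,2,3\}\}$ forces $E$ to select a \emph{mixed} subset of the flip arguments containing at least one $t_1^{(i)}$ and at least one $t_2^{(j)}$. A mixed selection is admissible in the gadget, but the gadget's attack structure (via the subdivided trees on $\Phi$ and on the literals) blocks any extension of $E$ by $\Phi$ together with a satisfying assignment; applying Claims~\ref{clm:bdeg-small-repair-1}--\ref{clm:bdeg-small-repair-3} to the $F$-part of $E$ then shows that the $F$-part must coincide with $\{e\}$, which is a $\sigma$-extension of $F$ iff $\Phi$ is unsatisfiable. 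Consequently a center exists in $F'$ iff $\Phi$ is unsatisfiable, establishing the reduction.

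The main obstacle will be ensuring simultaneously that (i) both $E_1$ and $E_2$ are preferred and semi-stable in $F'$ \emph{regardless} of $\Phi$'s satisfiability, which requires carefully adding attacks from the $t_i^{(j)}$'s to the literal arguments to prevent the obvious dominating superset $\{t_2^{(1)}, t_2^{(2)}, t_2^{(3)}\} \cup \{\Phi\} \cup \{l_1, \dots, l_n\}$ in the satisfiable case; (ii) every admissible mixed-flip choice still forces the $F$-part of any preferred/semi-stable completion to be exactly $\{e\}$ rather than something larger or empty; and (iii) all of these attachments respect the degree cap of~$5$, which is handled via subdivided trees of self-attacking intermediaries as in Lemma~\ref{the:small-paracoNP}. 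Once these three conditions are verified, the equivalence between center existence and unsatisfiability of $\Phi$ follows from a bounded case analysis over the selections of flip arguments, along the same lines as the arguments in Lemmas~\ref{lem:adjust-paracoNP} and~\ref{lem:center-w1hard}.
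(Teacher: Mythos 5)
Your reduction has the right outer shape (attach flip gadgets to the \Small{} instance of Lemma~\ref{the:small-paracoNP}, designate two antipodal extensions at distance $6$, and argue that a center forces a mixed selection), but the core mechanism is broken by your choice to make every $t_1^{(i)}$ and $t_2^{(i)}$ \emph{mutually attack} $\Phi$ and all literal arguments. With those conflicts in place, any mixed selection such as $E=\{t_2^{(1)},t_1^{(2)},t_1^{(3)}\}$ is already a \emph{maximal} admissible set: nothing can be added to it, since $\Phi$ and the literals conflict with the chosen flips and all remaining arguments are self-attacking. Hence $E$ is preferred (and, with the sinks, semi-stable) \emph{regardless} of whether $\Phi$ is satisfiable, and it has distance $2$ and $4$ to $E_1$ and $E_2$ respectively, so a center always exists and the reduction does not distinguish satisfiable from unsatisfiable formulas. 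The paper's construction avoids exactly this trap: only the pivot pair $t,t'$ conflicts with $\Phi$, while the flip pairs $w_i/w_i'$ do \emph{not} attack $\Phi$ or the literals. A center must avoid $t,t'$, and then a mixed choice like $\{w_1,w_2'\}$ is preferred/semi-stable precisely when no admissible set contains $\Phi$; in the satisfiable case, maximality forces $\Phi$ together with one literal per variable into any extension containing the mixed choice, which inflates the distance beyond $5$ (assuming at least $5$ variables). The unsatisfiability test is thus implemented through the maximality condition, not through explicit blocking attacks.

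There is a second, independent gap in your degree argument. The subdivided-tree gadgets $B(\Phi)$ and $B(\overline{\Phi})$ simulate one-directional \emph{defense chains} (``if $x$ is in an admissible set then so is $y$''); they do not simulate mutual attacks, and routing a conflict through a self-attacking intermediary in fact inverts its effect (to defend $\Phi$ against the intermediary one would need the attacking flip argument \emph{inside} the extension). Since you have six flip arguments each needing a mutual attack with $\Phi$ and with all $2n$ literals, the degree of $\Phi$ and of every literal would exceed $5$, and no gadget you have cited repairs this. The paper sidesteps the issue entirely by adding only the two direct mutual attacks $t\leftrightarrow\Phi$ and $t'\leftrightarrow\Phi$ (raising $\Phi$'s degree from $3$ to $5$) and by keeping the $w$- and $z$-arguments disjoint from the formula part. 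Finally, a minor inconsistency: you remove the isolated argument $e$ but later invoke Claim~\ref{clm:bdeg-small-repair-3} to conclude that the $F$-part of a center equals $\{e\}$; after the removal the intended conclusion is that the $F$-part is empty.
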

\begin{proof} \sloppypar
  We use a similar construction as in the proof of
  Theorem~\ref{the:small-paracoNP}. Let $F$ be the AF constructed from
  the $3$-CNF-$2$ formulas $\Phi$ as in the proof of
  Theorem~\ref{the:small-paracoNP}. Furthermore, let $F'$ be the AF
  obtained from $F$ after removing the argument $e$ and adding $12$
  novel arguments $t$, $t'$, $w_1$, $w_2$, $w_1'$, $w_2'$, $z$, $z'$,
  $z_1$, $z_1'$, $z_2$, $z_2'$ and the attacks
  $(t,z)$, $(z,z)$, $(t',z')$, $(z',z')$, $(w_1,z_1)$, $(z_1,z_1)$, $(w_1',z_1')$,
  $(z_1',z_1')$, $(w_2,z_2)$, $(z_2,z_2)$, $(w_2',z_2')$,
  $(z_2',z_2')$,
  $(t,\Phi)$, $(\Phi,t)$, $(t',\Phi)$, $(\Phi,t')$, $(t,t')$,
  $(t',t)$, $(w_1,w_1')$, $(w_1',w_1)$, $(w_2,w_2')$, $(w_2',w_2)$,
  $(w_1,t)$, $(w_2,t)$, $(w_1',t')$, and $(w_2',t)$ to $F$.
  Because $F$ has degree bounded by $5$ (and the degree of the
  argument $\Phi$ of $F$ is $3$) it follows that the maximum
  degree of $F'$ is $5$ as required. We claim that $(F',\{t,w_1',w_2'\},\{t',w_1,w_2\})$
  is a \textsc{Yes}-instance of \Center{} if and only if $\Phi$ is not
  satisfiable.

  It is straightforward to verify that the Claims~\ref{clm:bdeg-small-repair-1}
  and~\ref{clm:bdeg-small-repair-2} also hold for the AF $F'$.
  We need the following additional claims.
  \begin{CLM}\label{clm:bdeg-center-1}
    $\{t,w_1',w_2'\} \in \sigma(F')$ and $\{t',w_1,w_2\}\in \sigma(F')$.
  \end{CLM}
  We show that $\{t,w_1',w_2'\} \in \sigma(F')$. The case for
  $\{t',w_1,w_2\}\in \sigma(F')$ is analogous due to the symmetry of~$F'$. Clearly, $\{t,w_1',w_2'\} \in \ADM(F')$.

  We first show that for every $E
  \in \ADM(F')$ with $t \in E$ it holds that $E=\{t,w_1',w_2'\}$.
  Let $E \in \ADM(F')$ with $t \in E$. Clearly,
  $E$ does not contain $\Phi$, $t'$, $w_1$ or $w_2$ (since these
  arguments are neighbors of $t$ in $F'$). Using
  Claim~\ref{clm:bdeg-small-repair-1} it follows that 
  also none of the arguments in
  $\{x_1,\overline{x_1},\dots,x_n,\overline{x_n}\}$ are contained in $E$.
  Furthermore, because of the self-attacks in $F'$ it also holds
  that none of the arguments in
  $\{\overline{\Phi},C_1,\dots,C_m,z,z',z_1,z_1',z_2,z_2'\}$ are contained in $E$. 
  Hence, $E \subseteq \{t_1,w_1',w_2'\}$. However, because $t$ is
  attacked by $w_1$ and $w_2$ in $F$ and $w_1'$ and $w_2'$ are the only
  arguments of $F'$ that attack $w_1$ and $w_2$ it follows that
  $E=\{t,w_1',w_2'\}$. This implies that $\{t,w_1',w_2'\} \in
  \PRF(F')$. To show that $\{t,w_1',w_2'\} \in \SEM(F')$ observe that 
  $t$ is the only argument in $F'$ (apart from $z$ itself) that
  attacks $z$. Furthermore, because $z$ attacks itself it cannot
  be in any semi-stable extension of $F'$. Hence, $\{t,w_1',w_2'\} \in
  \SEM(F')$. This shows the claim.

  The proof of the previous claim actually showed the following
  slightly stronger statement.
  \begin{CLM}\label{clm:bdeg-center-2}
    Let $E \in \sigma(F')$ with $t \in E$. Then $E=\{t,w_1',w_2'\}$.
    Similarly, if $E \in \sigma(F')$ with $t' \in E$. Then
    $E=\{t',w_1,w_2\}$.
  \end{CLM}
  We are now ready to show that there is an $E \in \sigma(F')$ with
  $\bsdist(E,E_i)<\bsdist(E_1,E_2)=6$ for every $i \in \{1,2\}$ if and
  only if the formula $\Phi$ is not satisfiable. 
  
  Suppose that there is an $E \in \sigma(F')$ with
  $\bsdist(E,E_i)<\bsdist(E_1,E_2)=6$ for every $i \in \{1,2\}$. Then
  because of Claim~\ref{clm:bdeg-center-2} $E$ does not contain $t$ or
  $t'$. If there is an $E \in \sigma(F')$ with $\Phi \in E$ then we
  can assume (because of the maximality properties of the two
  semantics) that $E$ contains $1$ of $x_i$ or $\overline{x_i}$ for every
  $1 \leq i \leq n$. Hence, if $\Phi \in E$ and the formula $\Phi$
  contains at least $5$ variables (which we can assume w.l.o.g.) then
  $\bsdist(E,E_1)>5$. Consequently, $\Phi \notin E$ and it follows
  from Claims~\ref{clm:bdeg-small-repair-1}
  and~\ref{clm:bdeg-small-repair-2} that $\Phi$ is not satisfiable, as
  required.

  For the reverse direction suppose that $\Phi$ is not satisfiable. 
  Let $E:=\{w_1,w_2'\}$. Clearly, $\bsdist(E,E_i)=3<5$, as
  required. It remains to show that $E \in \sigma(F')$. It is easy to
  see that $E \in \ADM(F')$. Furthermore, because
  $\Phi$ is not satisfiable it follows from
  Claim~\ref{clm:bdeg-small-repair-2} that no $E' \in \sigma(F')$ can contain
  an argument in $\{\Phi,x_1,\overline{x_1},\dots,x_n,\overline{x_n}\}$ and
  hence $E \in \PRF(F')$. The maximality of $E$ with respect to the
  semi-stable extension now follows from the fact that $w_1$ and
  $w_2'$ are the only arguments that attack the arguments $z_1$ and
  $z_2'$ and because of their self-attacks none of $z_1$ and $z_2$ can
  them-self be contained in a semi-stable extension. This completes the
  proof of the theorem.
\end{proof}
Lemmas~\ref{the:small-paracoNP},~\ref{lem:adjust-paracoNP},
and~\ref{lem:center-paracoNP} together imply Theorem~\ref{the:coNP-hard}.

\section{Tractability Results}

Unfortunately, the results of the previous section draw a rather
negative picture of the complexity of problems important to dynamic
argumentation. In particular, Theorem~\ref{the:coNP-hard} strongly
suggests that at least for the preferred and semi-stable semantics
these problems remain intractable even when the degree of arguments is
bounded by a small constant. The hardness of these problems under the
preferred and semi-stable semantics seems to originate from their
maximality conditions. In this section we take a closer look at the
complexity of our problems for the three remaining semantics, i.e.,
the admissible, complete, and stable semantics. We show that in
contrast to the preferred and semi-stable semantics all our problems
become fixed-parameter tractable when the arguments of the given AF
have small degree.  In particular, we will show the following result.
\begin{THE}\label{the:tract}
  Let $\sigma \in \{\ADM,\COM,\STB\}$ and $c$ a natural number. Then the problems \Small{},
  \Repair{}, \Adjust{}, and \Center{} are fixed-parameter tractable if
  the maximum degree of the input AF is bounded by $c$.
\end{THE}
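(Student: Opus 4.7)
The plan is to design a bounded-search-tree algorithm that exploits the degree bound $c$. I unify the four problems as follows: find a \emph{toggle set} $D\subseteq X$ with $|D|\le k+O(1)$ such that $E:=S_0\triangle D$ belongs to $\sigma(F)$, where $S_0:=\emptyset$ for \Small, $S_0:=S$ for \Repair, $S_0:=E_0$ for \Adjust (with the forced toggle $t\in D$), and $S_0:=E_1$ for \Center (with $|D|<\bsdist(E_1,E_2)$). Auxiliary conditions such as $E\ne\emptyset$, $t\in E\triangle E_0$, or $\bsdist(E,E_2)<\bsdist(E_1,E_2)$ are verified whenever the search reaches an extension. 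In the \Small{} and \Repair{} cases where $S_0$ already makes $E=\emptyset$ initially, nonemptiness is enforced by an outer polynomial-time loop over the choice of the first toggled argument.

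The core lemma I would establish is a locality property of \emph{defects}. For each $\sigma\in\{\ADM,\COM,\STB\}$ there is a short list of defect types such that a set $E\subseteq X$ belongs to $\sigma(F)$ iff no defect occurs. For $\ADM$: (a) an edge inside $E$, or (b) an argument $s\in E$ possessing an attacker $x$ not counter-attacked by $E$. For $\COM$ additionally (c) an argument $u\notin E$ defended by $E$. For $\STB$ it suffices to take (a) together with (d) an argument $u\notin E^+$. Crucially, each defect is witnessed by a single argument $v$ whose occurrence depends only on the $2$-neighborhood $N^2[v]$ in $\overline{F}$, a set of size at most $1+c+c^2$. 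Consequently, whenever the current candidate $E=S_0\triangle D$ has a defect at $v$, any target solution $E^*=S_0\triangle D^*$ must contain at least one toggle from a concrete short list $L(v)\subseteq N^2[v]$: $\{s_1,s_2\}$ for (a); $\{s\}\cup N^-(x)$ for (b); $\{u\}\cup (N^2(u)\cap S_0)$ for (c); and $\{u\}\cup N^-(u)$ for (d).

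The algorithm starts with $D=\emptyset$ (after the initial loop in the \Small/\Repair{} case), tests in polynomial time whether $E=S_0\triangle D$ is a $\sigma$-extension, and returns yes if also the problem-specific side conditions are satisfied. Otherwise it picks any defect witness $v$ and recurses on $D\cup\{w\}$ for every $w\in L(v)\setminus D$. Correctness is proved by tracking a hypothetical target solution $E^*$: every defect of an intermediate $E$ must be resolved by $E^*$, so at least one $w\in L(v)$ lies in $D^*\setminus D$ and the corresponding branch faithfully follows $E^*$. Since $|D|\le k+O(1)$ and $|L(v)|\le 1+c^2$, the search tree has size $c^{O(k)}$ and each node is processed in polynomial time, yielding an $\FPT$ running time. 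The main obstacle is that a toggle can manufacture new defects elsewhere (for \COM{} a freshly included argument may defend further arguments; for \STB{} a removed argument may uncover attackers), but every new defect witness still lies in the $2$-neighborhood of the argument just toggled and is therefore addressed by the same branching rule in the next recursive call, so the depth of the tree remains bounded by $|D|$ and does not blow up.
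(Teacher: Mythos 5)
Your approach is genuinely different from the paper's: the paper does not branch at all, but encodes each problem as a first-order sentence of length bounded by a function of $k$ (using subformulas for conflict-freeness, the symmetric difference, and the $\sigma$-conditions) and invokes Seese's theorem on FO model checking over bounded-degree structures. Your defect-driven bounded search tree is a more elementary and self-contained route, and the locality analysis of the defect types for $\ADM$, $\COM$, and $\STB$ is essentially correct (modulo the small slip that for the completeness defect the relevant short list is $\{u\}\cup(N^2(u)\cap E)$, the \emph{current} defenders, not $N^2(u)\cap S_0$). For \Small{}, \Repair{}, and \Adjust{} the argument can be made to work: the side conditions are monotone along the correct branch in the sense that any intermediate candidate that happens to be a $\sigma$-extension is already a valid solution, so the search never needs to push past an extension. (Even here the outer loop should guess an argument $x\in E^*$ rather than only a ``first toggled argument,'' to cover the case $E^*\subseteq S$ in \Repair{}.)

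The genuine gap is \Center{}. There the starting candidate $E=E_1$ is \emph{already} a $\sigma$-extension, so it exhibits no defect, and your branching rule never fires; but $E_1$ fails the side condition $\bsdist(E,E_2)<\bsdist(E_1,E_2)$, so your algorithm answers \textsc{No} at the root on every instance. More generally, the condition $\bsdist(E,E_2)<\bsdist(E_1,E_2)$ is a global counting constraint on the toggle set $D$ (writing $\Delta:=E_1\bigtriangleup E_2$, it amounts to $|D|<2|D\cap\Delta|$) and it produces no local defect witness that could drive the recursion, so the correctness argument ``every defect of $E$ must be resolved by $E^*$'' gives you nothing when $E$ has no defects but $D\subsetneq D^*$. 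This is repairable but requires an additional idea: since $|\Delta|$ equals the parameter, you can afford an outer loop over all $2^{|\Delta|}$ guesses for $D^*\cap\Delta$, force those toggles first, restrict all subsequent defect-driven branches to arguments outside $\Delta$, and then check that any intermediate extension reached this way already satisfies both distance conditions (which holds because $D\cap\Delta$ is then fixed and nonempty while $|D|$ only grows up to the budget $|D^*|$). As written, however, the proof does not establish the \Center{} case, so the theorem is not fully proved.
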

To show the above theorem we will reduce it to a Model Checking
Problem for First Order Logic. For a class $\SSS$ of  finite relational
structures we consider the following parameterized problem.
\begin{myquote}
  \textsc{$\SSS$-FO Model Checking}

  \emph{Instance:} A finite structure $S$ with $S \in \SSS$ and a
  First Order (FO) formula $\varphi$.

  \emph{Parameter:} $|\varphi|$ (i.e., the length of $\phi$).

  \emph{Question:} Does $S$ satisfy (or model) $\varphi$, i.e., is $S
  \models \varphi$?
\end{myquote}
For a formal definition of the syntax and semantics of FOL and
associated notions we refer the reader to a standard text~\cite{FlumGrohe06}. Central
to our result is the following proposition.
\begin{PRO}[\cite{Seese96}]\label{pro:seese}
  Let $\SSS$ be a class of structures whose maximum degree is bounded
  by some constant. Then the problem
  \textsc{$\CCC$-FO Model Checking} is
  fixed-parameter tractable.
\end{PRO}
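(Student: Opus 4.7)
The plan is to prove Proposition~\ref{pro:seese} by invoking Hanf's locality theorem for first-order logic and combining it with a bounded-enumeration argument that exploits the degree bound. The key observation is that in a structure of maximum degree at most $c$, the number of isomorphism types of $r$-neighborhoods (over a signature of size at most $|\varphi|$) is bounded by a function of $r$, $c$, and $|\varphi|$ alone, so model checking reduces to a finite combinatorial counting problem.

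I would first recall Hanf's theorem in the form: for every FO sentence $\varphi$ of quantifier rank at most $q$ over a relational signature $\sigma$, there exist a radius $r = r(q,|\sigma|)$ and a threshold $t = t(q,|\sigma|)$ such that any two $\sigma$-structures that are Hanf $(r,t)$-equivalent satisfy exactly the same sentences of quantifier rank at most $q$. Two structures are Hanf $(r,t)$-equivalent if, for every isomorphism type $\tau$ of a pointed $\sigma$-structure, they contain either the same number of elements whose $r$-neighborhood has type $\tau$, or both contain at least $t$ such elements. The proof of this proceeds via an Ehrenfeucht--Fra\"iss\'e game in which Duplicator answers each of Spoiler's moves by an element with a locally isomorphic neighborhood, the threshold $t$ ensuring that fresh witnesses remain available in each round; I would take this statement as a black box.

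Given Hanf's theorem, the algorithm is straightforward. On input $(S,\varphi)$ with $S \in \SSS$, set $q := |\varphi|$, $r := r(q,|\sigma|)$, $t := t(q,|\sigma|)$. The $r$-ball around any element of $S$ contains at most $\sum_{i=0}^{r} c^i$ elements and can be extracted by breadth-first search in time $c^{O(r)}$; its isomorphism type is then identified among the bounded pool $N = N(r,c,|\sigma|)$ of candidate types. Summing, for each type, the number of realizations in $S$ capped at $t$ yields the Hanf $(r,t)$-type of $S$ in time $f(|\varphi|,c) \cdot |S|^{O(1)}$. Finally, enumerate all $\sigma$-structures of size at most $tN$, find one whose Hanf $(r,t)$-type agrees with that of $S$, and evaluate $\varphi$ on this constant-size witness by brute force; Hanf's theorem guarantees that the answer equals the truth value of $\varphi$ in $S$. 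The overall running time is $g(|\varphi|,c) \cdot |S|^{O(1)}$, which is fpt.

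The main obstacle is the locality theorem itself: Hanf's theorem is non-trivial and its standard proof requires a careful inductive back-and-forth argument together with the right choice of constants (one typically takes $r$ exponential and $t$ doubly exponential in $q$). Modulo this black box, the remaining ingredients---neighborhood enumeration, type identification, and brute-force evaluation on a witness of bounded size---are entirely routine under the bounded-degree hypothesis, so no further combinatorial work is needed.
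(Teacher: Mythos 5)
The paper does not prove this proposition at all: it is imported as a black box from Seese's 1996 paper, so there is no in-text argument to compare yours against. Your overall strategy---Hanf locality plus the observation that bounded degree caps both the size and the number of $r$-neighborhood types---is indeed the standard modern route to Seese's theorem, and the first two steps (stating the threshold version of Hanf's theorem and computing the Hanf $(r,t)$-type of $S$ in time $f(|\varphi|,c)\cdot|S|^{O(1)}$ by BFS and bounded-size isomorphism tests) are fine.

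The gap is in your final step. You ``enumerate all $\sigma$-structures of size at most $tN$, find one whose Hanf $(r,t)$-type agrees with that of $S$, and evaluate $\varphi$ on it.'' This presupposes that every Hanf $(r,t)$-type realized by some finite degree-$\le c$ structure is also realized by one of size at most $tN$, and you give no argument for this small-model property (nor for the specific bound $tN$). Neighborhood types impose global consistency constraints, so one cannot simply glue together $t$ copies of each realized ball; shrinking a structure while preserving all capped neighborhood counts is exactly the non-trivial ``effectivity'' issue that separates Hanf's semantic equivalence theorem from an actual algorithm. Without it, your procedure may find no witness among the enumerated structures and cannot answer. The standard ways to close this are either (i) to compute an \emph{effective Hanf normal form} of $\varphi$ syntactically, i.e., rewrite $\varphi$ as a Boolean combination of statements ``there are at least $m$ elements whose $r$-neighborhood has type $\tau$,'' each of which is then evaluated directly on $S$; or (ii) to use Gaifman's theorem instead, whose normal form is effectively computable, and evaluate each basic local sentence on $S$ by computing, for every element, whether its $r$-ball satisfies the local formula and then greedily searching for a scattered set of witnesses (easy under the degree bound). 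Either repair keeps all computation on $S$ itself and avoids the unproven existence of a small surrogate structure. You should either prove the small-model lemma you are implicitly using or switch to one of these two routes.
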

We note here that we define the maximum degree of a structure $S$ in terms
of the maximum degree of its associated Gaifman graph, which is the
undirected graph whose vertex set is the universe of $S$,
and where two vertices are joined by and edge if they appear together
in a tuple of a relation of $S$.

There exists several extensions of the above result to even more general
classes, e.g., the class of graphs with locally bounded treewidth. Due
to the technicality of the definition of these classes we refrain
from stating these results in detail and refer the interested reader
to~\cite{Kreutzer09}. Results such as the one above are also commonly
refereed to as meta-theorems, i.e., they allow us to make statements
about a wide variety of algorithmic problems. Similar meta-theorems have
been used before in the context of Abstract Argumentation (see,
e.g.,~\cite{Dunne07,KimOrdyniakSzeider11,DvorakSzeiderWoltran12}).

We will now show how to reduce our problems to the
\textsc{$\SSS$-FO Model Checking} problem. To do so we need to (1)
represent the input of \Small{}, \Repair{}, \Adjust{}, \Center{} in terms
of finite structures (whose maximum degree is bounded in terms of the
maximum degree of the input AF), and (2) give a FO sentence that is
satisfied by the structure obtained in step (1) if and only if the
given instance of \Small{}, \Repair{}, \Adjust{}, \Center{} is
a \textsc{Yes} instance.

We start by defining the structures that correspond to the input of
our problems. For all of our problems, the structure has 
universe $X$ and one binary relation $A$ that is equal to the attack
relation of the AF $F=(X,A)$, which is given in the input. Additionally,
the resulting structures will contain
unary relations, which represent arguments or sets of arguments,
respectively, which are given in the input. For instance, the structure for an
instance $(F,E_0,t,k)$ of \Adjust{} has universe $X$, one binary
relation $A$ that equals the attack relation of $F$, one unary
relation $E_0$ that equals the set $E_0$, and one unary relation $T$
with $T:=\{t\}$. The structures for the problems \Small{}, \Repair{}, and
\Center{} are defined analogously. It is straightforward to verify that
the maximum degree of the structures obtained in this way is equal to
the maximum degree of the input AF.


Towards defining the FO formulas for step (2) we start by defining the
following auxiliary formulas. Due to the complexity of the FO formulas
that we need to define,
we will introduce some additional notation that will allow us to reuse formulas by
substituting parts of other formulas. We will provide examples how to
interpret the notation when these formulas are introduced.

In the following let $l$ be a natural
number, and let $\phi(x)$, $\phi_1(x)$,
and $\phi_2(x)$ be FO formulas with free variable $x$.

\sloppypar The formula $\FOSET[l](x_1,\dots,x_l,y)$ is satisfied if and
only if the argument $y$ is equal to at least $1$ of the arguments $x_1,\dots,x_l$.
\begin{quote}
  $\FOSET[l](x_1,\dots,x_l,y):= (y=x_1 \lor \dots \lor y=x_l$)
\end{quote}
We note here that the notation $\FOSET[l]$ means that the exact definition
of the formula $\FOSET[l]$ depends on the value of $l$, e.g., if $l=3$
then $\FOSET[l]$ is the formula $y=x_1 \lor y=x_2 \lor y=x_3$.

\sloppypar The formula $\FOIS[\phi(x)]$ is satisfied if and
only if the set of arguments that satisfy the formula $\phi(x)$ is
conflict-free.
\begin{quote}
  $\FOIS[\phi(x)]:=\forall x \forall y (\phi(x) \land \phi(y)) \rightarrow \lnot Axy$
\end{quote}
Again we note here that the notation $\FOIS[\phi(x)]$ means that the exact definition
of the formula $\FOIS[\phi(x)]$ depends on the formula $\phi(x)$,
e.g., if $\phi(x):=\FOSET[l](x_1,\dots,x_l,x)$ then $\FOIS[\phi(x)]$
is the formula $\forall x \forall y (\FOSET[l](x_1,\dots,x_l,x) \land
\FOSET[l](x_1,\dots,x_l,y)) \rightarrow \lnot Axy$ which in turn
evaluates to $\forall x \forall y (\bigvee_{1 \leq i \leq l}x=x_i \land
\bigvee_{1 \leq i \leq l}y=x_i) \rightarrow \lnot Axy$.

The formula $\FODIFF[\phi_1(x),\phi_2(x)](y)$ is satisfied if and
only if the argument $y$ is contained in the symmetric difference of
the sets of arguments that satisfy the formula $\phi_1(x)$ and the set
of arguments that satisfy the formula $\phi_2(x)$.
\begin{quote}
  $\FODIFF[\phi_1(x),\phi_2(x)](y):=(\phi_1(y) \land \lnot \phi_2(y)) \lor (\lnot \phi_1(y) \land \phi_2(y))$
\end{quote}
The formula $\FOATMOSTK[\phi(x),k]$ is satisfied if and
only if the set of arguments that satisfy the formula $\phi(x)$
contains at most $k$ arguments.
\begin{quote}
  $\FOATMOSTK[\phi(x),k]:=\lnot (\exists x_1,\dots,\exists x_{k+1}
  (\bigwedge_{1 \leq i < j \leq k+1}x_i \neq x_j)  \land
 (\bigwedge_{1
    \leq i \leq k+1}\phi(x_i)))$
\end{quote}

The following formulas represent the semantics \ADM, \COM, \STB. These
formulas are therefore evaluated over a structure with universe $X$
and at least $1$ binary relation $A$ representing an AF
$F:=(X,A)$.

The formula $\FOADM[\phi(x)]$ is satisfied by the structure
representing an AF $F$ if and
only if the set of arguments that satisfy the formula $\phi(x)$ is an
admissible extension of $F$.
\begin{quote}
  $\FOADM[\phi(x)]:=\FOIS[\phi(x)] \land (\forall x \forall z (\phi(x) \land (\lnot \phi(z)) \land Azx)
  \rightarrow (\exists y \phi(y) \land Ayz))$
\end{quote}
The formula $\FOCOM[\phi(x)]$ is satisfied by the structure
representing an AF $F$ if and
only if the set of arguments that satisfy the formula $\phi(x)$ is a
complete extension of $F$.
\begin{quote}
  $\FOCOM[\phi(x)]:=\FOADM[\phi(x)] \land  
 (\forall z ((\forall a Aaz \rightarrow
  \exists x \phi(x) \land Axa) \land (\forall x \phi(x) \rightarrow \lnot (Axz
  \lor Azx))) \rightarrow \phi(z)$
\end{quote}
The formula $\FOSTB[\phi(x)]$ is satisfied by the structure
representing an AF $F$ if and
only if the set of arguments that satisfy the formula $\phi(x)$ is a
stable extension of $F$.
\begin{quote}
  $\FOSTB[\phi(x)]:=\FOIS[\phi(x)] \land (\forall z \phi(z) \lor (\exists a \phi(a) \land Aaz))$
\end{quote}

We are now ready to define the formulas that represent the
problems \Small{}, \Repair{}, \Adjust{}, and \Center{}.

Let $\sigma \in \{\ADM,\COM,\STB\}$. The formula $\FOSMALL[\sigma,k]$
is satisfied by the structure representing an instance $(F,k)$ of \Small{} if and
only if the AF $F$ has 
a non-empty $\sigma$-extension that contains at most $k$ arguments, i.e., if and
only if $(F,k)$ is a \textsc{Yes} instance of \Small{}.
\begin{quote}
  $\FOSMALL[\sigma,k]:=\exists x_1, \dots, \exists x_k \sigma[\FOSET[k](x_1,\dots,x_k,x)]$
\end{quote}
The formula $\FOREPAIR[\sigma,k]$ is satisfied by the structure
representing an instance $(F,S,k)$ of \Repair{} if and
only if $F$ has a $E \in \sigma(F)$ with $\bsdist(E,S) \leq k$, i.e., if and
only if $(F,S,k)$ is a \textsc{Yes} instance of \Repair{}.
\begin{quote}
  $\FOREPAIR[\sigma,k]:=\exists x_1, \dots, \exists x_k
  \sigma[\FODIFF[Sx,\FOSET[k](x_1,\dots,x_k,x)]]$
\end{quote}
The formula $\FOADJUST[\sigma,k]$ is satisfied by the structure
representing an instance $(F,E_0,t,k)$ of \Adjust{} if and
only if $F$ has a $E \in \sigma(F)$ such that $\bsdist(E_0,E) \leq k$ and $t \in E
\bigtriangleup E_0$, i.e., if and
only if $(F,E_0,t,k)$ is a \textsc{Yes} instance of \Adjust{}.
\begin{quote}
  $\FOADJUST[\sigma,k]:=\exists t \exists x_1, \dots, \exists x_{k-1}
  Tt \land \sigma[\FODIFF[E_0x,\FOSET[k](t,x_1,\dots,x_{k-1},x)]]$
\end{quote}
The formula $\FOCENTER[\sigma,k]$ is satisfied by the structure
representing an instance $(F,E_1,E_2)$ of \Center{} if and
only if $F$
has a $E \in \sigma(F)$ with $\bsdist(E_i,E) < \bsdist(E_1,E_2)$ for
every $i \in \{1,2\}$, i.e., if and
only if $(F,E_1,E_2)$ is a \textsc{Yes} instance of \Center{}.

\begin{quote}
  $\FOCENTER[\sigma,k]:=$

\quad $\exists x_1, \dots, \exists x_{k-1}
  \sigma[\FODIFF[E_1x,\FOSET[k-1](x_1,\dots,x_{k-1},x)]] \land$ 
 
\quad $ \FOATMOSTK[k-1,
\FODIFF[\FODIFF[E_1x,\FOSET[k-1](x_1,\dots,x_{k-1},x)],E_2x]$
\end{quote}

Because the length of the above FO formulas is easily seen to be
bounded in terms of the parameter $k$ of the respective problem, these
formulas together with
Proposition~\ref{pro:seese} immediately imply Theorem~\ref{the:tract}.

\section{Concluding Remarks}

We studied the computational problems \textsc{Repair},
\textsc{Adjust}, and \textsc{Center} which arise in the context of
dynamic changes of argumentation systems.  All three problems ask
whether there exists an extension of small distance to some given set
of arguments, and an upper bound to that distance is taken as the
parameter.  We considered all three problems with respect to five
popular semantics: the admissible, the complete, the preferred, the
semi-stable, and the stable semantics, with unrestricted argumentation
frameworks and for argumentation frameworks of bounded degree. We have
determined whether the problems remain coNP-hard, W[1]-hard, or are
fixed-parameter
tractable, 
see 
Figure~\ref{fig:comp-results}.


Parameterized complexity aspects of \emph{incremental computation} have
recently become the subject of research
\cite{DowneyEganFellowsRosamondShaw13,HartungNiedermeier}. We would
like to point out that some of our results, in particular our results
for the \textsc{Repair} problem, can be considered as 
contributions to this line of research: The argumentation framework
has changed, and the existing extension is not anymore an extension
with respect to the semantics under consideration. When considering
the admissible, the complete, and the stable semantics, and when the
degree of the argumentation framework is small, the it is more
efficient to repair the existing extension than to compute an
extension from scratch. On the other hand, when considering the
preferred and the semi-stable semantics, the problems remain
intractable even when the degree is small.

We close by suggesting an ``opportunistic'' version of the \textsc{Repair} problem. That is, given a set of arguments together with an argumentation framework, is it possible to change the framework so that the set becomes an extension? While the allowed elementary changes in the framework can be defined in various ways, the number of such changes needs to be small. Such a problem is a natural candidate for parameterized complexity analysis.

\section*{Acknowledgment}

We would like to thank Stefan Woltran for stimulating discussions.


\end{document}